\newcommand{\keywords}[1]{{\bf Keywords:} {#1}}
\newcommand{\JEL}[1]{{\bf JEL:} {#1}}
\newtheorem{thm}{theorem}[section]
\newtheorem{proposition}[thm]{Proposition}
\newtheorem{lemma}[thm]{Lemma}
\newenvironment{proof}{{\bf Proof. }}{\hfill$\Box$}
\newenvironment{MyProof}[1]{{\bf Proof {#1}}}{\hfill$\Box$}
\newcommand{\MakeTitle}{\maketitle\newcommand{\and}{$\cdot$ }}
\title{From the Samuelson Volatility Effect to a Samuelson Correlation Effect: Evidence from Crude Oil Calendar Spread Options
	\thanks{
		We would like to thank
        Iain Clark, Jean-Baptiste Gheeraert, Cassio Neri, Damien Pons and Matthias Scherer
        for helpful and stimulating comments, discussions and suggestions.
	}
}
\author{
	Lorenz Schneider
	\thanks{Center for Financial Risks Analysis (CEFRA), EMLYON Business School, \texttt{schneider@em-lyon.com}.}
    \quad \quad
	Bertrand Tavin
	\thanks{Center for Financial Risks Analysis (CEFRA), EMLYON Business School, \texttt{tavin@em-lyon.com}.}
}
\date{\today}
\begin{document}

\MakeTitle


\begin{abstract}
We introduce a multi-factor stochastic volatility model based on the CIR/Heston stochastic volatility process.
In order to capture the Samuelson effect displayed by commodity futures contracts, we add expiry-dependent exponential damping factors to their volatility coefficients.
The pricing of single underlying European options on futures contracts is straightforward and can incorporate the volatility smile or skew observed in the market.
We calculate the joint characteristic function of two futures contracts in the model in analytic form
and use the one-dimensional Fourier inversion method of \citet{CaldanaFusai2013} to price calendar spread options.
The model leads to stochastic correlation between the returns of two futures contracts.
We illustrate the distribution of this correlation in an example.
We then propose analytical expressions to obtain the copula and copula density directly from the joint characteristic function of a pair of futures.
These expressions are convenient to analyze the term-structure of dependence between the two futures produced by the model.
In an empirical application we calibrate the proposed model to volatility surfaces of vanilla options on WTI.
In this application we provide evidence that the model is able to produce the desired stylized facts in terms of volatility and dependence.
In a separate appendix, we give guidance for the implementation of the proposed model and the Fourier inversion results by means of one and two-dimensional FFT methods.


\bigskip

\keywords{Commodities \and Crude Oil \and Futures Curve \and Stochastic Volatility \and Multi-Factor Model \and Characteristic Function \and Fourier Transform \and Calendar Spread Option}

\bigskip





\JEL{C63 \and C52 \and G13}
\end{abstract}


\section{Introduction}
\label{s:Introduction}

Crude oil is by far the world's most actively traded commodity.
It is usually traded on exchanges in the form of futures contracts.
The two most important benchmark crudes are West Texas Intermediate (WTI), traded on the NYMEX, and Brent, traded on the ICE.
Recently, the Dubai Mercantile Exchange's (DME) Oman contract has been attracting investors looking for a Middle Eastern sour crude oil benchmark.
In the S\&P Goldman Sachs Commodity Index, WTI has a weight of $24.71 \%$ and Brent a weight of $22.34 \%$, for a combined total of almost half the index.
Another widely quoted index, Jim Rogers' RICI, has weights of $21 \%$ for WTI and $14 \%$ for Brent.
The crude oil derivatives market is also the most liquid commodity derivatives market.
Popular products are European, American, Asian, and calendar spread options on futures contracts.

An important empirical feature of crude oil markets is the absence of seasonality, which is in marked contrast to, say, agricultural commodities markets.
A second empirical feature is stochastic volatility of futures contracts, which is clearly reflected in the oil volatility index (OVX), or ``Oil VIX'', introduced on the CBOE in July 2008.
A third feature is known as the {\it Samuelson effect} \citep{Samuelson1965,BessembinderCoughenourSeguinSmoller1996},
i.e. the empirical observation that a given futures contract increases in volatility as it approaches its maturity date.
Finally, European and American options on futures (usually specified to expire just a couple of days before the underlying futures contract itself)
tend to show a more or less strongly pronounced volatility smile, the shape of which depends on the option's maturity.

European and American options depend on the evolution of just one underlying futures contract.
In contrast to these, calendar spread options have a payoff that is calculated from the difference of two futures contracts with different maturities.
Therefore, a mathematical analysis and evaluation of calendar spread options must be carried out in a framework that models the joint stochastic behaviour of several futures contracts.

In this article, we propose a multi-factor stochastic volatility model for the crude oil futures curve.
Like the popular \citet{ClewlowStrickland1999_1,ClewlowStrickland1999_2} models, the model is futures-based, not spot-based,
which means it can trivially match any given futures curve by accordingly specifying the futures' initial values without ``using up'' any of the other model parameters.
The variance processes are based on the \citet{CoxIngersollRoss1985} and \citet{Heston1993} stochastic variance process.
However, in order to capture the Samuelson effect, we add expiry dependent exponential damping factors.
As in the \citet{Heston1993} model, futures returns and variances are correlated, so that volatility smiles of American and European options observed in the market can be closely matched.
The instantaneous correlation of the returns of two futures contracts is also stochastic in our multi-factor model, since it is calculated from the stochastic variances.

Our first result is the calculation of the joint characteristic function of the log-returns of two futures contracts in analytic form.
Using this function, calendar spread option prices can be obtained via $1$-dimensional Fourier integration as shown by \citet{CaldanaFusai2013}
or the $2$-dimensional Fast Fourier Transform (FFT) algorithm of \citet{HurdZhou2010}.
The fast speed of these algorithms is of great importance when calibrating the model to these products.

Our second result is to give analytical formulas to recover the dependence structure of two futures prices from the joint characteristic function of the model.
The proposed expressions give the dependence structure as the copula function of the two prices or as its copula density.
In many studies, the measure chosen to describe dependence is Pearson's rho, which, however, also depends on the marginal distributions.
In order to completely insulate our analysis from the influence of the marginal distributions,
we carry it out via the copula function produced by the joint characteristic function.
Once we have the copula and its copula density it becomes possible to compute various dependence and concordance measures,
such as Spearman's rho and Kendall's tau, for two futures contracts.


Using these mathematical tools, we carry out an analysis of the dependence of the returns of two futures contracts.
We observe that, for a fixed time-horizon, these returns become {\it less dependent} as the maturity of the second underlying futures contract increases and moves away from that of the first underlying contract.
In analogy to the classic Samuelson {\it volatility} effect, we call this effect the Samuelson {\it correlation} effect.

Copula functions can also be used to give a rigorous definition of the {\it implied correlation} of calendar spread options.
The traditional definition assumes a bivariate Black-Scholes-Merton model for the two underlyings, which assumes in particular that the marginal distributions are log-normal.
In contrast, here, following \citet{Tavin2014}, and using the actual marginal distributions of the model, for a given calendar spread option price
we define the implied correlation as the value of the correlation parameter in the bivariate Gaussian copula that reproduces this price.
Note that implied correlation depends both on the strike and the maturity of the option, phenomena usually referred to as {\it correlation smile/skew/frown} and {\it correlation term structure}.

In an empirical section we calibrate the two-factor version of our model to market data from three different dates.
We show that the model can fit European and American option prices as well as calendar spread option prices very closely.
The model could therefore be used by a price maker in a crude oil market to provide consistent and arbitrage free prices to other market participants.

We conclude this introduction with a survey of previous literature.
A detailed exposition of commodity models is given by \citet{Clark2014}.
One of the most important and still widely used models is the \citet{Black1976} futures model, which is set in the Black-Scholes-Merton framework.
Contracts with different maturities can have different volatilities in this model, but for each contract the volatility is constant.
Therefore, Black's model doesn't capture the Samuelson effect.
European option prices in this model are given by the Black-Scholes-Merton formula, and consequently there is no volatility smile for options with different strikes.
Finally, all contracts are perfectly correlated in this model, since they are driven by the same Brownian motion.

\citet{ClewlowStrickland1999_1,ClewlowStrickland1999_2} propose one-factor and multi-factor models of the entire futures curve with deterministic time-dependent volatility functions.
A popular specification for these functions is with exponential damping factors.
Since this specification still leads to log-normally distributed futures prices, there is no volatility smile or skew in this model.
In the one-factor model, the instantaneous returns of contracts with different maturities are perfectly correlated;
in the multi-factor model, however, these returns are not perfectly, but deterministically correlated.

Stochastic volatility models have been proposed by \citet{Scott1987,Scott1997}, \citet{Wiggins1987}, \citet{HullWhite1987}, \citet{SteinStein1991},
\citet{Heston1993}, \citet{BakshiCaoChen1997}, and \citet{SchoebelZhu1999}, among others.
Extending the \citet{Heston1993} model to multiple factors, \citet{ChristoffersenHestonJacobs2009} show that under certain independence assumptions
it is straightforward to obtain the characteristic function in closed form and calculate European option prices using the Fourier transform.
An important aspect they then proceed to study is the stochastic correlation between the stock return and variance implied by the model.
\citet{DuffiePanSingleton2000} have studied a very general class of jump-diffusions.
The model presented in this paper fits into this framework (in the version extended to time-dependent parameters).

\citet{TrolleSchwartz2009b} introduce a very general two-factor spot based model,
with, in addition, two stochastic volatility factors as well as two stochastic factors for the forward cost-of-carry.
The variance processes are extensions of the CIR/Heston process to a more general mean-reversion specification.
They also give the dynamics of their model in terms of the futures curve.
The main focus of their study is on unspanned stochastic volatility of single-underlying options on futures contracts.


Spread options have been well studied in a two-factor Black-Scholes-Merton framework.
\citet{Margrabe1978} gives an exact formula when the strike $K$ equals zero, and \citet{Kirk1995}, \citet{CarmonaDurrleman2003},
\citet{BjerksundStensland2011} and \citet{VenkatramananAlexander2011} give approximation formulas for any $K$.

\citet{CaldanaFusai2013} have recently proposed a very fast one-dimensional Fourier method that extends the approximation given by \citet{BjerksundStensland2011}
for the Black-Scholes-Merton model to any model for which the joint characteristic function is known.
As in \citet{BjerksundStensland2011}, the method provides a lower bound for the spread option price, but in practice the bound seems to be so close to the actual option price that it can be used as the price itself.

\citet{CarrMadan1999} show how the Fast Fourier Transform (FFT) can be used to price European options with different strikes in one step.
\citet{DempsterHong2002} and \citet{HurdZhou2010} apply the two-dimensional FFT to the pricing of spread options.
Hurd and Zhou's method returns spread option prices at many different strikes (after a re-scaling and interpolation step), in analogy to \citet{CarrMadan1999}, in one inversion step.

The rest of the paper proceeds as follows.
In Section \ref{s:StochasticVolatilityModelForCrudeOil} we define the proposed model and provide the associated joint characteristic function.
Section \ref{s:CalendarSpreadOptionsAndAnalysisOfDependence} deals with spread options and the structure of dependence produced by the model.
Section \ref{s:Calibration} presents an empirical analysis based on different market situations.
Section \ref{s:Conclusion} concludes.


\section{A Model with Stochastic Volatility for Crude Oil Futures}
\label{s:StochasticVolatilityModelForCrudeOil}

\subsection{The Financial Framework and the Model}
\label{ss:FinancialFrameworkAndModel}

We begin by giving a mathematical description of our model under the risk-neutral measure ${\mathbb Q}$.
Let $n \geq 1$ be an integer, and let $B_1, ..., B_{2n}$ be Brownian motions under ${\mathbb Q}$.
Let $T_m$ be the maturity of a given futures contract.
The futures price $F(t, T_m)$ at time $t, 0 \leq t \leq T_m$, is assumed to follow the stochastic differential equation (SDE)
\begin{equation}
\label{FuturesSDE}
dF(t, T_m) = F(t, T_m) \sum_{j=1}^n e^{-\lambda_j(T_m - t)} \sqrt{v_j(t)} dB_j(t), \; F(0, T_m) = F_{m,0} > 0.
\end{equation}
The processes $v_j, j=1, ..., n,$ are CIR/Heston square-root stochastic variance processes assumed to follow the SDE
\begin{equation}
\label{VarianceSDE}
dv_j(t) = \kappa_j \left( \theta_j - v_j(t) \right) dt + \sigma_j \sqrt{v_j(t)} dB_{n+j}(t), \; v_j(0) = v_{j,0} > 0.
\end{equation}
For the correlations, we assume
\begin{equation}
\label{FuturesVarianceCorrelations}
\langle dB_j(t), dB_{n+j}(t) \rangle = \rho_j dt, -1 < \rho_j < 1, j=1, ..., n,
\end{equation}
and that otherwise the Brownian motions $B_j, B_k, k \neq j, j + n,$ are independent of each other.
As we will see, this assumption has as a consequence that the characteristic function factors into $n$ separate expectations.
Note also that the identity
$\det \left( \begin{array}{cc} I_n & B \\ C & I_n \end{array} \right) = \det \left( I_n - B C \right)$
together with Sylvester's criterion can be used to show that the correlation matrix determined by \eqref{FuturesVarianceCorrelations}
is indeed positive definite for any choice of the parameters $\rho_j, j=1, ..., n$.

For fixed $T_m$, the futures log-price $\ln F(t, T_m)$ follows the SDE
\begin{equation}
\label{LogFuturesSDE}
d\ln F(t, T_m) = \sum_{j=1}^n \left( e^{-\lambda_j(T_m - t)} \sqrt{v_j(t)} dB_j(t) - \frac{1}{2} e^{-2\lambda_j(T_m - t)} v_j(t) dt \right),
\; \ln F(0, T_m) = \ln F_{m,0}.
\end{equation}
Integrating \eqref{LogFuturesSDE} from time $0$ up to a time $T, T \leq T_m$, gives
\begin{equation}
\label{LogFuturesIntegratedSDE}
\ln F(T, T_m) -  \ln F(0, T_m) = \sum_{j=1}^n \int_0^T e^{-\lambda_j(T_m - t)} \sqrt{v_j(t)} dB_j(t) - \frac{1}{2} \sum_{j=1}^n \int_0^T e^{-2\lambda_j(T_m - t)} v_j(t) dt.
\end{equation}

We define the log-return between times $0$ and $T$ of a futures contract with maturity $T_m$ as
$$
X_m(T) := \ln \left( \frac{F(T, T_m)}{F(0, T_m)} \right).
$$
In the following, the joint characteristic function $\phi$ of two log-returns $X_1(T), X_2(T)$ will play an important role.
For $u = (u_1, u_2) \in {\mathbb C}^2$, $\phi$ is given by
\begin{equation}
\label{jointCharacteristicFunction_returns}
\phi(u)
= \phi(u; T, T_1, T_2)
= {\mathbb{E^Q}} \left[ \exp \left( i \sum_{k=1}^2 u_k X_k(T) \right) \right].
\end{equation}
The joint characteristic function $\Phi$ of the futures log-prices $\ln F(T, T_1), \ln F(T, T_2)$ is then given by
\begin{equation}
\label{jointCharacteristicFunction_prices}
\Phi(u) = \exp \left( i \sum_{k=1}^2 u_k \ln F(0, T_k) \right) \cdot \phi(u).
\end{equation}
Note that futures prices in our model are not mean-reverting,
and that the log-price $\ln F(t, T_m)$ at time $t$ and the log-return $\ln F(T, T_m) - \ln F(t, T_m)$ are independent random variables.

In the following proposition, we show how the joint characteristic function $\phi$, and therefore also the single characteristic function $\phi_1$,
is given by a system of two ordinary differential equations (ODE).
\begin{proposition}
\label{Prop:JointCharacteristicFunction}
The joint characteristic function $\phi$ at time $T \leq T_1, T_2$ for the log-returns $X_1(T), X_2(T)$ of two futures contracts with maturities $T_1, T_2$ is given by
\begin{align*}
\phi(u) &= \phi(u; T, T_1, T_2)
\\
&=
\prod_{j=1}^n
\exp \left( i \frac{\rho_j}{\sigma_j} \left\{ \frac{\kappa_j \theta_j}{\lambda_j} (f_{j,1}(u,0) - f_{j,1}(u,T)) - f_{j,1}(u,0) v_j(0) \right\} \right)
\exp \left( A_j(0,T) v_j(0) + B_j(0,T) \right),
\end{align*}
where
\begin{align*}
f_{j,1}(u,t) &= \sum_{k=1}^2 u_k e^{- \lambda_j(T_k - t)}, \quad f_{j,2}(u,t) = \sum_{k=1}^2 u_k e^{-2\lambda_j(T_k - t)},
\\
q_j(u,t)   &= i \rho_j \frac{\kappa_j - \lambda_j}{\sigma_j} f_{j,1}(u,t) - \frac{1}{2} (1 - \rho_j^2) f_{j,1}^2(u,t) - \frac{1}{2} i f_{j,2}(u,t),
\end{align*}
and the functions $A_j: (t,T) \mapsto A_j(t,T)$ and $B_j: (t,T) \mapsto B_j(t,T)$ satisfy the two differential equations
\begin{align*}
\frac{\partial A_j}{\partial t} - \kappa_j A_j + \frac{1}{2} \sigma_j^2 A_j^2 + q_j &= 0,
\\
\frac{\partial B_j}{\partial t} + \kappa_j \theta_j A_j &= 0,
\end{align*}
with $A_j(T,T) = i \frac{\rho_j}{\sigma_j} f_{j,1}(u,T), \; B_j(T,T) = 0.$

The single characteristic function $\phi_1$ at time $T \leq T_1$ for the log-return $X_1(T)$ of a futures contract with maturity $T_1$ is given by setting $u_2 = 0$
in the joint characteristic function.
\end{proposition}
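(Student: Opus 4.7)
The plan is to factorize the expectation across the $n$ independent Brownian pairs, conditionally integrate out the component orthogonal to the variance noise, use It\^o's lemma to rewrite the remaining stochastic integral in terms of $v_j$ and its boundary values, and finally apply the affine Feynman--Kac methodology to the CIR dynamics of $v_j$.

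First, by \eqref{LogFuturesIntegratedSDE} the exponent $i\sum_{k=1}^2 u_k X_k(T)$ decomposes into a sum over $j$ of pieces depending only on $(B_j,B_{n+j},v_j)$, namely $i\int_0^T f_{j,1}(u,t)\sqrt{v_j(t)}\,dB_j(t)-\tfrac{i}{2}\int_0^T f_{j,2}(u,t)v_j(t)\,dt$. Since by \eqref{FuturesVarianceCorrelations} the pairs $(B_j,B_{n+j})$ are independent across $j$, the expectation factorizes as $\prod_{j=1}^n \mathbb{E}^{\mathbb{Q}}[\exp(\cdots)]$.

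Second, writing $B_j=\rho_j B_{n+j}+\sqrt{1-\rho_j^2}\,W_j$ with $W_j$ independent of $(B_{n+j},v_j)$, and conditioning on the trajectory of $v_j$, the $W_j$-integral is Gaussian with conditional variance $\int_0^T f_{j,1}^2(u,t)v_j(t)\,dt$. Integrating it out contributes $-\tfrac{1}{2}(1-\rho_j^2)\int_0^T f_{j,1}^2(u,t)v_j(t)\,dt$ to the exponent. Third, applying It\^o's lemma to $f_{j,1}(u,t)v_j(t)$ and using $\partial_t f_{j,1}(u,t)=\lambda_j f_{j,1}(u,t)$ together with \eqref{VarianceSDE}, I would rewrite the remaining $B_{n+j}$-stochastic integral as
\[
\sigma_j\!\int_0^T\! f_{j,1}(u,t)\sqrt{v_j(t)}\,dB_{n+j}(t)= f_{j,1}(u,T)v_j(T)-f_{j,1}(u,0)v_j(0)+(\kappa_j-\lambda_j)\!\int_0^T\! f_{j,1}(u,t)v_j(t)\,dt-\kappa_j\theta_j\!\int_0^T\! f_{j,1}(u,t)\,dt.
\]
Scaling by $i\rho_j/\sigma_j$ and evaluating $\int_0^T f_{j,1}(u,t)\,dt=\lambda_j^{-1}(f_{j,1}(u,T)-f_{j,1}(u,0))$, the deterministic pieces consolidate into the first exponential factor in the statement, while the $v_j(T)$ boundary term delivers the terminal coefficient $\alpha_j:=i\rho_jf_{j,1}(u,T)/\sigma_j$. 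Combined with the Gaussian contribution and the It\^o drift correction, the three integrands multiplying $v_j(t)\,dt$ assemble to exactly $q_j(u,t)$.

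At this point the $j$-th factor has been reduced to a deterministic prefactor times $\mathbb{E}^{\mathbb{Q}}[\exp(\alpha_j v_j(T)+\int_0^T q_j(u,t)v_j(t)\,dt)]$, a standard affine Feynman--Kac expectation under the CIR dynamics. Positing the exponential-affine ansatz $M_t=\exp(A_j(t,T)v_j(t)+B_j(t,T)+\int_0^t q_j(u,s)v_j(s)\,ds)$, It\^o's lemma and the martingale requirement on $M$ force the drift to vanish; collecting the coefficients of $v_j(t)$ and of $1$ yields precisely the stated Riccati system, with terminal data $A_j(T,T)=\alpha_j$ and $B_j(T,T)=0$ read off from $M_T$. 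The main bookkeeping obstacle is verifying that the three separate contributions (the It\^o correction $-\tfrac{i}{2}f_{j,2}$, the Gaussian variance $-\tfrac{1}{2}(1-\rho_j^2)f_{j,1}^2$, and the $dB_{n+j}$ rewrite producing $i\rho_j(\kappa_j-\lambda_j)f_{j,1}/\sigma_j$) combine to give exactly $q_j$; once this matches, taking the product over $j$ yields the asserted formula, and setting $u_2=0$ recovers the single characteristic function $\phi_1$.
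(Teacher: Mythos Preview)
Your proof is correct and follows essentially the same route as the paper: factorize over $j$ by independence, decompose $B_j$ along and orthogonal to $B_{n+j}$, integrate out the Gaussian orthogonal part, rewrite the remaining $B_{n+j}$-integral via It\^o applied to $f_{j,1}(u,t)v_j(t)$ (the paper isolates this as a separate lemma using integration by parts), and then reduce to an affine Feynman--Kac problem for the CIR factor. The only cosmetic difference is that you derive the Riccati system via a martingale drift condition on $M_t$ whereas the paper writes down the Feynman--Kac PDE and inserts the exponential-affine ansatz; these are equivalent formulations of the same step.
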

%
The statement regarding the single characteristic function immediately follows from the definition of the joint characteristic function.
The joint characteristic function is calculated in appendix \ref{a1:Proofs}.

In the next proposition, we show how this ODE system can be solved analytically.
A closed form expression for $A_j$ is found thanks to a computer algebra software and $B_j$ is then proportional to the integral of $A_j$ on $[0,T]$.

\begin{proposition}
\label{Prop:AnalyticSolution_A}
Dropping the references to $j$, the function $A: (t,T) \mapsto A(t,T)$ is given in closed form as
\begin{align*}
A(t,T) & = \frac{1}{\sqrt{2} z \sigma} \cdot
\frac{\left( (M^-(t)-M^+(t)) X_0 + X_1 U^+(t) \right) C_1 - 2 \left( M^+(t) X_0 - X_1 U^+(t) \right)(C_3-i C_2)e^{\lambda t}}{M^+(t) X_0 - X_1 U^+(t)} \\[10pt]
&\quad + \frac{1}{\sigma^2} \cdot \frac{\left( (\kappa - \lambda)M^+(t) + (\kappa + \lambda)M^-(t) \right) X_0 - \left( (\kappa - \lambda) U^+(t) - 2 \lambda U^-(t) \right) X_1}{M^+(t) X_0 - X_1 U^+(t)},
\end{align*}
with $z = \sqrt{C_2+i C_3}$ and $C_1, C_2, C_3$ constants with respect to $t$, defined as
\begin{equation*}
C_1 = \rho \frac{\kappa - \lambda}{\sigma} \sum_{k=1}^2 u_k e^{- \lambda T_k}, \quad
C_2 = - \frac{1}{2} (1 - \rho^2) \left(\sum_{k=1}^2 u_k e^{- \lambda T_k} \right)^2, \quad
C_3 = - \frac{1}{2} \sum_{k=1}^2 u_k e^{-2\lambda T_k},
\end{equation*}
\begin{align*}
X_0 &= 2 Y U^+(T) + 4 z \lambda U^-(T), \\
X_1 &= 2 Y M^+(T) - 2 \left( z (\lambda + \kappa) + \sigma \sqrt{2} \frac{C_1}{2} \right) M^-(T), \\
Y &= \sigma \sqrt{2} \left(\frac{C_1}{2} -i e^{\lambda T} C_2  + e^{\lambda T} C_3\right) - z \left(\kappa - \lambda - i \rho f_1(T) \sigma \right), \\
M^{\pm}(t) &= M \left(\frac {\kappa z - \frac{\sigma \sqrt{2}}{2} C_1}{2 z \lambda} \pm \frac{1}{2}, \frac{\kappa + \lambda}{\lambda}, \frac{\sigma \sqrt{2}}{\lambda} i z {e^{\lambda \, t}} \right), \\
U^{\pm}(t) &= U \left(\frac {\kappa z - \frac{\sigma \sqrt{2}}{2} C_1}{2 z \lambda} \pm \frac{1}{2}, \frac{\kappa + \lambda}{\lambda}, \frac{\sigma \sqrt{2}}{\lambda} i z {e^{\lambda \, t}} \right).
\end{align*}
The functions $M$ and $U$ are the confluent hypergeometric functions.
\end{proposition}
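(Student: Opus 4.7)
The plan is a textbook Riccati linearization followed by a reduction to Kummer's confluent hypergeometric equation. First, I rewrite the source term of the Riccati ODE in the abbreviations of the statement, so that
\begin{equation*}
q(t) = iC_1\, e^{\lambda t} + (C_2 + iC_3)\, e^{2\lambda t},
\end{equation*}
and apply the substitution $A = \tfrac{2}{\sigma^2}\partial_t \ln u$. A short computation turns the Riccati equation together with its terminal condition into the linear Cauchy problem
\begin{equation*}
u'' - \kappa u' + \tfrac{1}{2}\sigma^2 q(t)\, u = 0, \qquad \frac{u'(T)}{u(T)} = \frac{i\sigma\rho\, f_1(u,T)}{2},
\end{equation*}
which determines $u$ up to a common multiplicative constant, to which $A$ is insensitive.

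Next I would change variables to $s = \tfrac{i\sigma\sqrt{2}\,z}{\lambda}\, e^{\lambda t}$. This scaling is chosen precisely so that the $(C_2+iC_3)e^{2\lambda t}$ contribution to $q$ becomes $-\tfrac{1}{4}s^2$ after passing to $s$-derivatives. The further substitution $u = e^{-s/2}\, s^{\kappa/\lambda}\, w(s)$ then reduces the equation to Kummer's confluent hypergeometric equation
\begin{equation*}
s\, w'' + (b' - s)\, w' - a'\, w = 0
\end{equation*}
with precisely $b' = (\kappa+\lambda)/\lambda$ and $a' = (\kappa z - \sigma\sqrt{2}\,C_1/2)/(2z\lambda) + \tfrac{1}{2}$, matching the parameters appearing in $M^+$ and $U^+$. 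Its general solution is $w(s) = \gamma_1 M^+(t) + \gamma_2 U^+(t)$, and imposing the boundary condition on $u'/u$ at $t=T$ gives a single linear equation in $(\gamma_1,\gamma_2)$; carrying out the elimination produces exactly the coefficients $X_0$, $X_1$ and the auxiliary combination $Y$ of the statement, so that $u(t) \propto e^{-s/2}\, s^{\kappa/\lambda}\,(X_0 M^+(t) - X_1 U^+(t))$ and $X_0 M^+(t) - X_1 U^+(t)$ is indeed the common denominator of the stated formula.

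To recover $A(t,T) = \tfrac{2}{\sigma^2}\partial_t\ln u$, I would differentiate the prefactor $e^{-s/2}s^{\kappa/\lambda}$ directly (this contributes $-\lambda s/\sigma^2$ and $2\kappa/\sigma^2$) and convert $\partial_s w / w$ into the basis $\{M^\pm, U^\pm\}$ using the Kummer identities
\begin{equation*}
s\,M'(a,b,s) = a\bigl[M(a+1,b,s) - M(a,b,s)\bigr], \quad (b-a)\,M(a-1,b,s) + (2a - b + s)\,M(a,b,s) - a\,M(a+1,b,s) = 0,
\end{equation*}
and their $U$-analogues. Eliminating $M(a'+1,b',s)$ against the three-term recurrence replaces it by a combination of $M^+$ and $M^-$ (and similarly for $U$), which is how the shifted functions $M^-$, $U^-$ enter the numerator; regrouping with the prefactor contributions then produces the coefficients ``$\kappa\pm\lambda$'' and ``$2\lambda$'' in front of $M^\pm, U^\pm$ in the second summand of the stated formula.

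The main obstacle is not conceptual but algebraic: Kummer's contiguous relations admit many superficially different equivalent forms, and arriving at the compact expression quoted in the proposition requires disciplined bookkeeping. Since the paper explicitly notes that the closed form was originally obtained with a computer algebra system, the cleanest way to present the argument is by verification: one takes the candidate $A(t,T)$, differentiates once, re-expresses $\partial_s M^\pm$ and $\partial_s U^\pm$ in the $\{M^\pm, U^\pm\}$ basis via the same contiguous identities, and checks that substitution into the Riccati equation reduces to a polynomial identity whose coefficients vanish once the definitions of $X_0, X_1, Y$ and $z$ are inserted; the terminal condition $A(T,T) = i\rho f_1(u,T)/\sigma$ is then recovered by direct evaluation at $t=T$.
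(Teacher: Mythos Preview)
The paper does not actually prove this proposition: it merely remarks that ``a closed form expression for $A_j$ is found thanks to a computer algebra software and $B_j$ is then proportional to the integral of $A_j$ on $[0,T]$,'' and then records background on Kummer's functions. There is no derivation in the appendix either. Your proposal therefore goes well beyond the paper's own treatment.

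What you sketch is the standard and correct route: writing $q(t)=iC_1e^{\lambda t}+(C_2+iC_3)e^{2\lambda t}$, linearizing the Riccati equation via $A=\tfrac{2}{\sigma^2}u'/u$ to obtain $u''-\kappa u'+\tfrac{\sigma^2}{2}q\,u=0$, and then passing to the variable $s=\tfrac{i\sigma\sqrt{2}z}{\lambda}e^{\lambda t}$ followed by the factorization $u=e^{-s/2}s^{\kappa/\lambda}w(s)$ to land on Kummer's equation with exactly the parameters $b'=(\kappa+\lambda)/\lambda$ and $a'=\tfrac{\kappa z-\sigma\sqrt{2}C_1/2}{2z\lambda}+\tfrac12$ that appear in $M^\pm, U^\pm$. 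The use of contiguous relations to express $\partial_s M, \partial_s U$ in the $\{M^\pm,U^\pm\}$ basis, and the identification of $X_0,X_1,Y$ from the terminal condition, are also on the right track. Your closing remark---that because of the many equivalent Kummer identities the cleanest way to certify the specific form quoted is by direct substitution back into the Riccati equation and the terminal condition---is exactly in the spirit of the paper, which relied on a CAS in the first place. In short, your outline is correct and more informative than what the paper offers; the only caveat is that the final algebraic matching to the paper's particular arrangement of terms genuinely is tedious and best delegated to symbolic computation, as you acknowledge.
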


$M$ and $U$ are usually referred to as Kummer's functions as they solve Kummer's equation \citep{Kummer1836,Tricomi1955}.
The function $M$ is also known as ${}_1 F_1$, and the function $U$ as Tricomi's function.
Given, $a,b,z \in \mathbb{C}$, Kummer's equation is
\begin{equation}
z \frac{\partial^2 w}{\partial z^2} + (b-z)\frac{\partial w}{\partial z} - a w = 0.
\end{equation}
A way to obtain $M(a,b,z)$ is by means of a series expansion
\begin{equation}
M(a,b,z) = 1 + \sum^{\infty}_{n=1}{\frac{z^n \prod^{n}_{j=1}{(a+j-1)}}{n! \prod^{n}_{j=1}{(b+j-1)}}}.
\end{equation}
And $U(a,b,z)$ is obtained from $M$ as
\begin{equation}
U(a,b,z) = \frac{\pi}{\sin{(\pi b)}} \left(\frac{M(a,b,z)}{\Gamma(1+a-b)\Gamma(b)} - z^{1-b} \frac{M(1+a-b,2-b,z)}{\Gamma(a)\Gamma(2-b)} \right),
\end{equation}
where $\Gamma$ denotes the Gamma function extended to the complex plane.
These results and additional properties of Kummer's functions (e.g. integral representations) can be found in Chap. 13 of \citet{AbramovitzStegun1972}.
A detailed analysis of how to implement Kummer's functions is given by \citet{Pearson2009}.
A suitable way to implement the complex Gamma function is the \citet{Lanczos1964} approximation.

As has already been mentioned, the model introduced above is an extension of the one-factor \citet{ClewlowStrickland1999_1} and multi-factor \citet{ClewlowStrickland1999_2}
models to stochastic volatility.
Since these models are useful benchmarks, we give a description of them and calculate their joint characteristic function.
In the risk-neutral measure ${\mathbb Q}$, the futures price $F(t, T_m)$ is modelled with deterministic time-dependent volatility functions $\hat{\sigma}_j(t,T_m)$:
\begin{equation}
\label{ClewlowStrickland_SDE}
dF(t, T_m) = F(t, T_m) \sum_{j=1}^n \hat{\sigma}_j(t,T_m) dB_j(t),
\end{equation}
where $B_1, ..., B_n$ are independent Brownian motions.
A popular specification for the volatility functions is
\begin{equation}
\label{ClewlowStrickland_VolatilityFunction}
\hat{\sigma}_j(t,T_m) := e^{-\lambda_j(T_m - t)} \sigma_j
\end{equation}
for fixed parameters $\sigma_j, \lambda_j \geq 0$,
so that the volatility of a contract a long time away from its maturity is damped by the exponential factor(s).

The marginal and joint distributions of futures prices are log-normal in the Clewlow-Strickland models.
Nevertheless, it can be very useful to know the single and joint characteristic functions as well for testing and benchmarking purposes.
The next proposition gives closed-form solutions for them.
\begin{proposition}
\label{Prop:JointCharacteristicFunction_ClewlowStrickland}
In the Clewlow and Strickland model defined by \eqref{ClewlowStrickland_SDE} and \eqref{ClewlowStrickland_VolatilityFunction},
the joint characteristic function $\phi$ at time $T \leq T_1, T_2$ for the log-returns $X_1(T), X_2(T)$ of two futures contracts with maturities $T_1, T_2$ is given by
\begin{align*}
\phi(u) &= \phi(u; T, T_1, T_2)
\\
&= \prod_{j=1}^n
\exp \left( -\frac{\sigma_j^2}{4 \lambda_j} (e^{2 \lambda_j T} - 1)
\left\{
i (u_1 e^{-2 \lambda_j T_1} + u_2 e^{-2 \lambda_j T_2}) + (u_1 e^{-\lambda_j T_1} + u_2 e^{-\lambda_j T_2})^2
\right\} \right).
\end{align*}
The single characteristic function $\phi_1$ at time $T \leq T_1$ for the log-return of a futures contract with maturity $T_1$ is given by setting $u_2 = 0$
in the joint characteristic function.
\end{proposition}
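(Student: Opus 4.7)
The plan is to exploit the Gaussianity that is built into the Clewlow–Strickland specification and reduce the statement to computing the mean and covariance of a jointly normal vector. Since the drift and diffusion coefficients of $\ln F(t,T_m)$ are deterministic, the two log-returns $X_1(T)$ and $X_2(T)$ are, jointly, a bivariate normal random vector, and the joint characteristic function of a Gaussian vector is completely determined by its mean and covariance.

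First, I would apply Itô's formula to \eqref{ClewlowStrickland_SDE} with the volatility specification \eqref{ClewlowStrickland_VolatilityFunction} to obtain
\begin{equation*}
d\ln F(t,T_m) = \sum_{j=1}^n e^{-\lambda_j(T_m-t)}\sigma_j\, dB_j(t) - \tfrac{1}{2}\sum_{j=1}^n e^{-2\lambda_j(T_m-t)}\sigma_j^2\, dt,
\end{equation*}
and then integrate from $0$ to $T$ to write $X_m(T)=\mu_m+\sum_{j=1}^n \sigma_j\, I_{j,m}$, where $I_{j,m}:=\int_0^T e^{-\lambda_j(T_m-t)}\, dB_j(t)$ and $\mu_m=-\tfrac{1}{2}\sum_j \sigma_j^2\int_0^T e^{-2\lambda_j(T_m-t)}\,dt$. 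The deterministic integral is elementary: $\int_0^T e^{-2\lambda_j(T_m-t)}\,dt = \tfrac{1}{2\lambda_j}e^{-2\lambda_j T_m}(e^{2\lambda_j T}-1)$, giving $\mu_m=-\sum_j \tfrac{\sigma_j^2}{4\lambda_j}(e^{2\lambda_j T}-1)e^{-2\lambda_j T_m}$.

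Next, because $B_1,\ldots,B_n$ are independent, the vectors $(I_{j,1},I_{j,2})$ are independent across $j$. This already delivers the product form $\phi(u)=\prod_{j=1}^n \phi_j(u)$, so it suffices to treat one factor. For fixed $j$, Itô isometry gives
\begin{equation*}
\mathrm{Cov}(I_{j,1},I_{j,2}) = \int_0^T e^{-\lambda_j(T_1+T_2-2t)}\,dt = \tfrac{1}{2\lambda_j}e^{-\lambda_j(T_1+T_2)}(e^{2\lambda_j T}-1),
\end{equation*}
and analogously for $\mathrm{Var}(I_{j,m})$. Since the resulting pair $(\sigma_j I_{j,1},\sigma_j I_{j,2})$ is bivariate normal with zero mean, the Gaussian characteristic function yields
\begin{equation*}
\mathbb{E}\bigl[e^{i\sigma_j(u_1 I_{j,1}+u_2 I_{j,2})}\bigr] = \exp\!\Bigl(-\tfrac{\sigma_j^2}{4\lambda_j}(e^{2\lambda_j T}-1)(u_1 e^{-\lambda_j T_1}+u_2 e^{-\lambda_j T_2})^2\Bigr).
\end{equation*}

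Finally, I would combine this with the contribution $\exp(iu_1\mu_1+iu_2\mu_2)$ from the drift, noting that the drift contribution factors over $j$ as well and reassembles into the $i(u_1 e^{-2\lambda_j T_1}+u_2 e^{-2\lambda_j T_2})$ term in the claimed formula. There is no real obstacle beyond careful bookkeeping; the only thing to watch is to keep the drift terms from \emph{both} log-returns inside the same bracketed expression with the quadratic term, so that the claimed factorisation over $j$ emerges cleanly. The single-contract statement then follows immediately by setting $u_2=0$.
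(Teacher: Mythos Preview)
Your proposal is correct and follows essentially the same route as the paper: factor over $j$ by independence of the $B_j$, separate the deterministic drift from the Gaussian Wiener integrals, and evaluate the same elementary integrals $\int_0^T e^{-2\lambda_j(T_m-t)}\,dt$ and $\int_0^T e^{-\lambda_j(T_1+T_2-2t)}\,dt$. The only cosmetic difference is that you invoke the bivariate normal characteristic function explicitly, whereas the paper manipulates the expectations line by line; the computations and the resulting expression are identical.
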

We prove this result in appendix \ref{a1:Proofs}.

This result can also be used to add non-stochastic volatility factors to the model by multiplying the joint characteristic function of
Proposition \ref{Prop:JointCharacteristicFunction} with one or more factors from Proposition \ref{Prop:JointCharacteristicFunction_ClewlowStrickland}.
Since each ``Clewlow-Strickland'' factor depends on only two parameters $\lambda_j$ and $\sigma_j$, it does not add a significant burden to the
calibration to market data, while allowing for increased flexibility when fitting the model to the observed volatility term structure.

\subsection{Pricing Vanilla Options}
\label{ss:PricingVanillaOptions}

European options on futures contracts can be priced using the Fourier inversion technique as described in \citet{Heston1993} and \citet{BakshiMadan2000},
or the FFT algorithm of \citet{CarrMadan1999}. Alternatively, they can be priced by Monte Carlo simulation using discretizations of
\eqref{FuturesSDE} (Euler scheme) or \eqref{LogFuturesSDE} (Log-Euler scheme) and of \eqref{VarianceSDE}.

Let $K$ denote the strike and $T$ the maturity of a European call option on a futures contract $F$ with maturity $T_m \geq T$,
and let the single characteristic function $\Phi_1$ of the futures log-price $\ln F(T, T_m)$ be given by
$\Phi_1(u) = e^{i u \ln F(0, T_m)} \phi_1(u)$.
In the general formulation of \citet{BakshiMadan2000}, the numbers
\begin{align}
\label{Pi1}
\Pi_1 &:= \frac{1}{2} + \frac{1}{\pi} \int_0^\infty \Re \left[ \frac{e^{-i u \ln K} \phi_1(u - i)}{i u \phi_1(-i)} \right] du,
\\
\label{Pi2}
\Pi_2 &:= \frac{1}{2} + \frac{1}{\pi} \int_0^\infty \Re \left[ \frac{e^{-i u \ln K} \phi_1(u)}{i u} \right] du,
\end{align}
represent the probabilities of $F$ finishing in-the-money at time $T$ in case the futures $F$ itself or a risk-free bond is used as num\'{e}raire, respectively.
The price $C$ of a European call option is then obtained with the formula
\begin{equation}
C = e^{-rT} \left( F(0, T_1) \Pi_1 - K \Pi_2 \right).
\end{equation}
European put options can be priced via put-call parity $C - P = e^{-rT} \left( F(0, T_1) - K \right)$.

American call and put options can be evaluated via Monte-Carlo simulation using the method of \citet{LongstaffSchwartz2001}.
Alternatively, the early exercise premium can be approximated with the formula of \citet{BaroneAdesiWhaley1987}.
\citet{TrolleSchwartz2009b} (Appendix B) address the issue of estimating European prices from American prices.

A typical WTI volatility surface displays high implied volatilities at the short end and low implied volatilities at the long end.
This is in line with the Samuelson effect.
Furthermore, there is usually a strongly pronounced smile at the short end, and a weak smile at the long end.

\subsection{Stochastic Correlation in the Multi-Factor Model}
\label{ss:StochasticCorrelation}

We will show in this section that if we specify our model with two or more volatility factors, then the returns of two given futures contracts are stochastically correlated,
which is a realistic and important feature.

Define
\begin{equation}
\label{Vij}
V_{ij}(t) := \langle \frac{dF(t, T_i)}{F(t, T_i)}, \frac{dF(t, T_j)}{F(t, T_j)} \rangle / dt.
\end{equation}
Then the instantaneous correlation $\rho(t)$ at time $t$ is given by:
\begin{equation}
\label{rho_tT1T2}
\rho(t) = \frac{V_{12}(t)}{\sqrt{V_{11}(t)} \sqrt{V_{22}(t)}}.
\end{equation}

Let us begin with an examination of the $1$-factor model,
in which futures returns follow the SDE
\begin{equation}
\label{FuturesSDE_1FactorModel}
\frac{dF(t, T_m)}{F(t, T_m)} = e^{-\lambda_1(T_m - t)} \sqrt{v_1(t)} dB_1(t)
\end{equation}
and the variance process follows the SDE
\begin{equation}
\label{VarianceSDE_1FactorModel}
dv_1(t) = \kappa_1 \left( \theta_1 - v_1(t) \right) dt + \sigma_1 \sqrt{v_1(t)} dB_2(t).
\end{equation}
The correlation is given by $\langle dB_1(t), dB_2(t) \rangle = \rho_1 dt$.
Inserting \eqref{FuturesSDE_1FactorModel} into \eqref{Vij} gives for the instantaneous covariance
\begin{equation}
\label{Vij_1FactorModel}
V_{12}(t) = e^{-\lambda_1(T_1 + T_2 - 2t)} v_1(t).
\end{equation}
\cite{CoxIngersollRoss1985} show that the random variable $v_1(t)$ follows a non-central $\chi^2$-distribution.
It is easy to see that the instantaneous correlation \eqref{rho_tT1T2} is always equal to one in the $1$-factor model:
\begin{equation}
\label{rho_tT1T2_1FactorModel}
\rho(t) = \frac{e^{-\lambda_1(T_1 + T_2 - 2t)} v_1(t)}{\sqrt{e^{-2\lambda_1(T_1 - t)} v_1(t)} \sqrt{e^{-2\lambda_1(T_2 - t)} v_1(t)}} = 1.
\end{equation}
Finally, the terminal covariance is given by
\begin{equation*}
\int_0^T \langle \frac{dF(t, T_1)}{F(t, T_1)}, \frac{dF(t, T_2)}{F(t, T_2)} \rangle = \int_0^T e^{-\lambda_1(T_1 + T_2 - 2t)} v_1(t) dt.
\end{equation*}
What can we say about its distribution?

\citet{AlbaneseLawi2005} consider the Laplace transform of such integrals (see also \citet{HurdKuznetsov2008})
in general, and in particular for the CIR/Heston-process: 
\begin{equation*}
L_{T-t}(X_t, \vartheta) = E^P \left[ e^{-\vartheta \int_t^T \phi(X_s) ds} q(X_T) | \mathcal{F}_t \right]
\end{equation*}
where $t \leq T, \vartheta \in \mathbb{C}$ and $\vartheta, q: \mathbb{R} \to \mathbb{R}$ are two Borel functions.
However, they come to the conclusion in Corollary 3, eq. (50), that the Laplace transform of the integral in our case,
which includes an exponential factor, is not computable in closed form.

Next, we examine the $2$-factor model, in which futures returns follow the SDE
\begin{equation}
\label{FuturesSDE_2FactorModel}
\frac{dF(t, T_m)}{F(t, T_m)} = e^{-\lambda_1(T_m - t)} \sqrt{v_1(t)} dB_1(t) + e^{-\lambda_2(T_m - t)} \sqrt{v_2(t)} dB_2(t).
\end{equation}
and the two variance processes follow the SDEs
\begin{align}
\label{VarianceSDE1_2FactorModel}
dv_1(t) &= \kappa_1 \left( \theta_1 - v_1(t) \right) dt + \sigma_1 \sqrt{v_1(t)} dB_3(t),
\\
\label{VarianceSDE2_2FactorModel}
dv_2(t) &= \kappa_2 \left( \theta_2 - v_2(t) \right) dt + \sigma_2 \sqrt{v_2(t)} dB_4(t).
\end{align}
The correlations are given by $\langle dB_1(t), dB_3(t) \rangle = \rho_1 dt$, $\langle dB_2(t), dB_4(t) \rangle = \rho_2 dt$,
and all other correlations are zero.

Inserting \eqref{FuturesSDE_2FactorModel} into \eqref{Vij} gives for the instantaneous covariance
\begin{equation}
\label{Vij_2FactorModel}
V_{12}(t) = e^{-\lambda_1(T_1 + T_2 - 2t)} v_1(t) + e^{-\lambda_2(T_1 + T_2 - 2t)} v_2(t).
\end{equation}
In contrast to the $1$-factor model, the instantaneous correlation $\rho(t)$ in the $2$-factor model is now stochastic.
The same holds of course for the general multi-factor model with $n \geq 2$.

What can we say about the distribution of $\rho(t)$?
It follows from the definition \eqref{rho_tT1T2} that $0 < \rho(t) \leq 1$, so that the returns of the two futures contracts are always positively correlated.
To get some more insight, we consider the $2$-factor model in a numerical example.
The parameters of the model have been chosen for illustrative purposes and are given in Table \ref{tab:ModelParameters}:
the first factor is more volatile than the second one, and it also decays more slowly.

\begin{table}[htbp]
  \centering
  \caption{Model Parameters}
    \begin{tabular}{cr|cr}
    \addlinespace
    \toprule
    $\kappa_1$  & $1.00$ & $\kappa_2$  & $1.00$ \\
    $\theta_1$  & $0.16$ & $\theta_2$  & $0.09$ \\
    $\rho_1$    & $0.00$ & $\rho_2$    & $0.00$ \\
    $\sigma_1$  & $0.25$ & $\sigma_2$  & $0.20$ \\
    $v_1(0)$    & $0.16$ & $v_2(0)$    & $0.09$ \\
    $\lambda_1$ & $0.10$ & $\lambda_2$ & $2.00$ \\
    \bottomrule
    \end{tabular}
  \label{tab:ModelParameters}
\end{table}

For two contracts with maturities $T_1 = 1$ and $T_2 = 2$ years, respectively, we plot the empirical density function of $\rho(t)$ in Figure \ref{Fig:Correlations}.

\begin{figure}[ht]
\centering
\includegraphics[width=.75\textwidth]{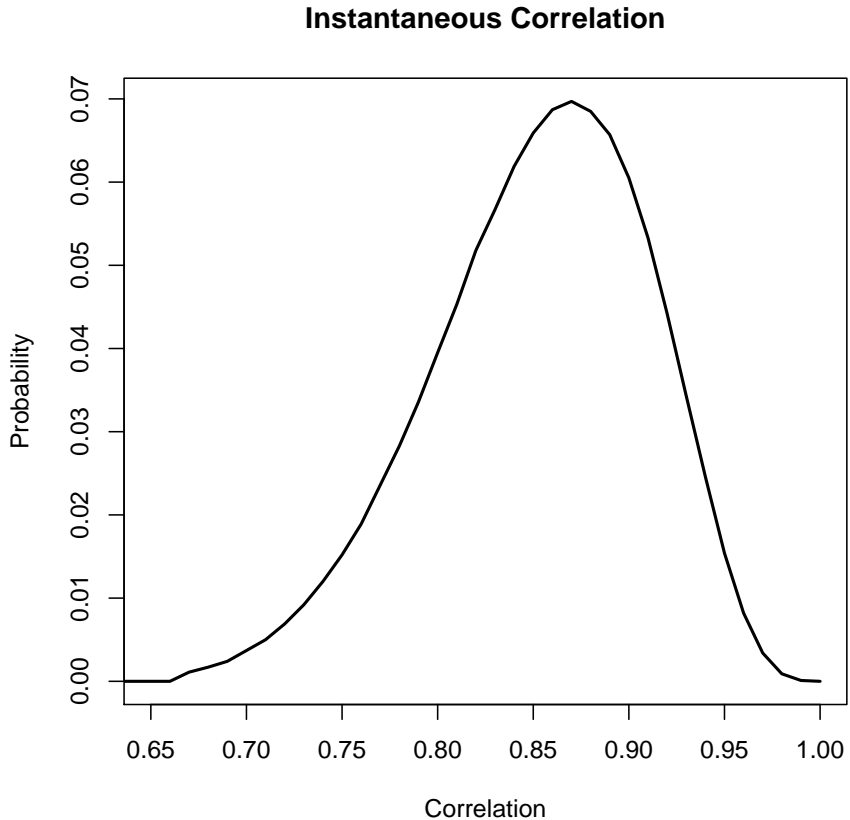}
\caption{Empirical probabilities of the instantaneous correlation $\rho(1; 1, 2)$}
\label{Fig:Correlations}
\end{figure}

These plotted empirical probabilities were obtained by sampling \eqref{rho_tT1T2} one million times in a Monte Carlo simulation.
The empirical mean is $\overline{\rho} = 0.8575$.
In case both stochastic volatilities are made deterministic by setting $\sigma_1 = \sigma_2 = 0$
in equations \eqref{VarianceSDE1_2FactorModel} and \eqref{VarianceSDE2_2FactorModel},
the empirical mean is $\overline{\rho}^0 = 0.8619$, which is in excellent agreement with the
deterministic instantaneous correlation of a corresponding $2$-factor Clewlow-Strickland model with volatility functions
$\sigma_j(t,T_m) = e^{-\lambda_j(T_m - t)} \hat{\sigma}_j, j=1,2,$
with $\lambda_1 = 0.10, \lambda_2 = 2.00, \hat{\sigma}_1 = \sqrt{\theta_1} = 0.40, \hat{\sigma}_2 = \sqrt{\theta_2} = 0.30.$


\section{Calendar Spread Options and Analysis of Dependence}
\label{s:CalendarSpreadOptionsAndAnalysisOfDependence}

In this Section we review the definition, functioning and pricing of calendar spread options.
We then review the notion of implied correlation associated to a price of calendar spread option.
Finally we introduce analytic results to obtain the copula function and the copula density produced by a model defined by means of its joint characteristic function.

\subsection{Calendar Spread Options written on WTI futures}
\label{ss:CalendarSpreadOptionsWrittenOnWTIFutures}

Calendar spread options (CSO) are very popular options in commodities markets.
There are two types of these options: calendar spread calls (CSC) and calendar spread puts (CSP).
Like spread options in equities derivatives markets, their payoff depends on the price difference of two underlying assets.
A call spread option on two equity shares $S_1$ and $S_2$ gives the holder, at time $T$, the payoff $\max \left( S_1(T) - S_2(T) - K, 0 \right),$
and a put the payoff $\max \left( K - \left( S_1(T) - S_2(T) \right), 0 \right).$
In the case of calendar spread options, the two underlyings are two futures contracts on the same commodity, but with different maturities $T_1$ and $T_2$.
Along with the volatilities, the dependence between the two contracts has a large influence on the option's price.
Note that CSOs written on commodities futures should not be confused with the well-known options strategy named calendar spread.
This strategy involves two vanilla options (one bought and one sold) with different maturities, whereas the CSO is a single option.
Examples of CSOs are the NYMEX calendar spread options on crude oil (WTI).
A WTI CSC (CSP) represents an option to assume a long (short) position in the first expiring futures contract in the spread and a short (long) position in the second contract.
There are also so-called {\it financial} CSOs traded on the NYMEX, which are cash settled.
For pricing purposes we will not distinguish between these two settlement types in this paper.

There is usually very good liquidity on $1$-month spreads (for which $T_2 - T_1 = 1$ month), whereas options on $2, 3, 6$ and $12$-month spreads are less liquid.
The NYMEX CSO on $1$-month WTI spreads can be accessed in Bloomberg using the ticker \texttt{WA}.

Let two futures maturities $T_1, T_2$, an option maturity $T$, and a strike $K$ (which is allowed to be negative) be fixed.
Then the payoffs of calendar spread call and put options, $CSC$ and $CSP$, are respectively given by
\begin{equation}
\label{CalendarSpreadCallPayoff}
CSC(T) = \left( F(T, T_1) - F(T, T_2) - K \right)^+,
\end{equation}
\begin{equation}
\label{CalendarSpreadPutPayoff}
CSP(T) = \left( K - \left( F(T, T_1) - F(T, T_2) \right) \right)^+ .
\end{equation}

To evaluate such options with a pricing model, the discounted expectation of the payoff must be calculated in the risk-neutral measure.
Assuming a continuously-compounded risk-free interest rate $r$, we have at time $t_0 = 0$:
\begin{align}
CSC(0, T, T_1, T_2, K) &= e^{-rT} \mathbb{E}_0 \left[ \left( F(T, T_1) - F(T, T_2) - K \right)^+ \right],
\\
CSP(0, T, T_1, T_2, K) &= e^{-rT} \mathbb{E}_0 \left[ \left( K - (F(T, T_1) - F(T, T_2)) \right)^+ \right].
\end{align}

Note that there is a model-independent put-call parity for calendar spread options:
\begin{equation}
\label{CalendarSpreadPutCallParity}
CSC(0) - CSP(0) = e^{-rT} \left( F(0, T_1) - F(0, T_2) - K \right).
\end{equation}

Apart from Monte-Carlo simulation (where simulation of the CIR/Heston process is well-understood), we are aware of three efficient methods to price spread options.
The first two are suitable when the joint characteristic function is available.
The third one is more direct but needs the marginals and joint distribution function of the underlying futures.

The formula of \citet{BjerksundStensland2011} for a joint Black-Scholes model is generalized by \citet{CaldanaFusai2013} to models for which the joint characteristic function is known.
Strictly speaking, these methods give a lower bound for the spread option price.
However, our tests lead us to agree with the above authors that this lower bound is very close to the actual price (typically the first three digits after the comma are the same),
and we therefore regard this lower bound as the spread option's price itself. Furthermore, in case $K = 0$ the formula is exact (exchange option case).

Let $\Phi_T(u) = \Phi(u)$ be the joint characteristic function of the logarithms $\ln F(T, T_1), \ln F(T, T_2)$ of two futures prices
as given in equation \eqref{jointCharacteristicFunction_prices}.
Following \citet{CaldanaFusai2013}, the price of the calendar spread option call with maturity $T$ and strike $K$ is given in terms of a Fourier inversion formula as
\begin{equation}
\label{CaldanaFusaiFormula}
CSC(0,K,T,T_1,T_2) = \left(\frac{e^{-\delta k - rT}}{\pi} \int_0^{+\infty} e^{-i \gamma k} \Psi_T(\gamma; \delta, \alpha) d\gamma \right)^+,
\end{equation}
where
\begin{align*}
\Psi_T(\gamma; \delta, \alpha) & = \frac{e^{i (\gamma - i \delta) \ln(\Phi_T(0, -i \alpha))}}{i (\gamma - i \delta)} \\
\cdot & \left[ \Phi_T \left( (\gamma - i \delta) - i, -\alpha (\gamma - i \delta) \right) - \Phi_T \left( \gamma - i \delta, -\alpha (\gamma - i \delta) - i \right) - K \Phi_T \left( \gamma - i \delta, -\alpha (\gamma - i \delta) \right) \right]
\end{align*}
and
\begin{equation*}
\alpha = \frac{F(0,T_2)}{F(0,T_2) + K}, \qquad \; k = \ln(F(0,T_2) + K).
\end{equation*}
The parameter $\delta$ controls an exponential decay term as in \citet{CarrMadan1999}.
We found a value of $\delta = 1$ to perform well in numerical applications.
This method appears to be the most suitable to our model and setup.

An alternative method that also works with the joint characteristic function of the log-returns has been proposed by \citet{HurdZhou2010}.
In their paper, the transform of the calendar spread payoff function with a strike of $K = 1$ is analytically calculated.
The price of the corresponding option is then deduced from this analytical result.
Let $x_m (T) := \ln F(T, T_m)$ denote the time $T$ futures log-price.
Following \citet{HurdZhou2010}, the calendar spread call option with maturity $T$, underlying futures maturities $T_1, T_2$, and strike $K = 1$,
can be priced as:
\begin{equation}
\label{HurdZhouFormula}
CSC(0,K=1,T,T_1,T_2) = \frac{e^{-rT}}{4 \pi^2} \iint_{\mathbb{R}^2 + i \epsilon} \phi(u; T, T_1, T_2) \hat{P}(u) d^2 u,
\end{equation}
where
\begin{equation*}
\hat{P}(u) = \frac{\Gamma(i(u_1 + u_2) - 1) \Gamma(-i u_2)}{\Gamma(i u_1 + 1)},
\end{equation*}
and $\Gamma$ is the complex gamma function defined for $\Re(z) > 0$ by the integral $\Gamma(z) = \int_0^{\infty} e^{-t} t^{z-1} dt$.
The double integral  in \eqref{HurdZhouFormula} is evaluated numerically using the two-dimensional Fast Fourier Transform ($2d$ FFT).
The algorithm returns a whole matrix of option prices at different values of $x_1 = \ln F(0, T_1)$ and $x_2 = \ln F(0, T_2)$.
Options with other strikes ($K \neq 1$) are then evaluated by re-scaling and interpolation, if necessary, using the same matrix.

Methods working with distribution functions instead of characteristic functions are also available to price calendar spread options.
In this category of methods, the most direct approach is to evaluate a double integral of the payoff function times the joint density of the two underlying futures contracts.
However, following \citet{Tavin2014}, we can write calendar spread option prices as {\it single} integrals over the marginal and joint distribution functions.
%
The calendar spread call and put option prices are given, at $t=0$ and for $K \geq 0$, by
\begin{align}
CSC(0,K, T, T_1, T_2) & = \int_0^{+\infty} \left( G_2(x, T, T_2) - G(x, x+K, T, T_1, T_2) \right) dx,
\label{CalendarSpreadCallOptionFormula} \\
CSP(0, T, T_1, T_2, K) & = \int_0^{+\infty} \left( G_1(x + K, T, T_1) - G(x, x+K, T, T_1, T_2) \right) dx,
\label{CalendarSpreadPutOptionFormula}
\end{align}
where $G_1$ and $G_2$ are the marginal distribution functions of $X_1$ and $X_2$, respectively, and $G$ is their joint distribution function.
The case $K < 0$ is treated as a calendar spread option written on the reverse spread $F(T, T_2) - F(T, T_1)$ with the opposite strike $-K$.
A proof for spread options that can easily be adapted to our case is given in \citet{Tavin2014}.

In our model, the distribution functions involved in (\ref{CalendarSpreadCallOptionFormula}) and (\ref{CalendarSpreadPutOptionFormula}) are not readily available.
However, it is possible to calculate $G_1$ and $G_2$ from the joint characteristic function $\phi$ of ($X_1,X_2$) using direct inversion formulas given by
\begin{align}
G_1(x,T,T_1) & = \frac{e^{a x}}{2 \pi}\int^{+\infty}_{-\infty}{e^{-iu x}\frac{\phi(u+ia,0,T,T_1,T_2)}{a-iu}du}, \label{MarginalCDF1Formula} \\
G_2(x,T,T_2) & = \frac{e^{a x}}{2 \pi}\int^{+\infty}_{-\infty}{e^{-iu x}\frac{\phi(0,u+ia,T,T_1,T_2)}{a-iu}du}, \label{MarginalCDF2Formula}
\end{align}
with a proper choice of the smoothing parameter $a>0$. We find that $a=3$ works well in our applications.
A detailed proof of these inversion results can be found in \citet{LeCourtoisWalter2014}.
The joint distribution function $G$ can be recovered in a similar way using a direct two-dimensional inversion formula.
\begin{lemma}
\label{Lemma:JointCDFFormula}
\begin{equation}
G(x_{1},x_{2},T,T_1,T_2)
= \frac{e^{a_1 x_{1}+a_2x_{2}}}{4 \pi^2}\int^{+\infty}_{-\infty}\int^{+\infty}_{-\infty}{e^{-i(u_1 x_{1}+u_2 x_{2})}\frac{\phi(u_1+ia_1,u_2+ia_2,T,T_1,T_2)}{(a_1-iu_1)(a_2-iu_2)}du_1 du_2}.
\label{JointCDFFormula}
\end{equation}
\end{lemma}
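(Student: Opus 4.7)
The plan is to extend to two dimensions the exponential-damping argument that yields the one-dimensional formulas \eqref{MarginalCDF1Formula}--\eqref{MarginalCDF2Formula}. The key idea is to introduce the damped joint CDF
\[
h(x_1,x_2) := e^{-a_1 x_1 - a_2 x_2}\, G(x_1,x_2,T,T_1,T_2),
\]
and choose $a_1,a_2 > 0$ strictly inside the region where the bilateral Laplace transform $\mathbb{E}[e^{-s_1 X_1 - s_2 X_2}]$ is finite, so that $\phi$ admits an analytic extension to a neighbourhood of $(u_1+ia_1, u_2+ia_2)$. A Chernoff bound then gives $G(x_1,x_2) \leq \mathbb{P}(X_j \leq x_j) \leq e^{(a_j+\varepsilon) x_j}\mathbb{E}[e^{-(a_j+\varepsilon)X_j}]$ for some $\varepsilon > 0$, which combined with the right-tail bound $G \leq 1$ puts $h$ into $L^1(\mathbb{R}^2)$.

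Next I compute the two-dimensional Fourier transform $\hat h(u_1,u_2) := \iint_{\mathbb{R}^2} e^{i(u_1 x_1 + u_2 x_2)} h(x_1,x_2)\,dx_1\,dx_2$. Applying Fubini and integrating by parts once in each variable, using $\partial_{x_1}\partial_{x_2} G = f$ where $f$ is the joint density of $(X_1,X_2)$, the damping makes all boundary terms vanish and I obtain
\[
\hat h(u_1,u_2) = \frac{1}{(a_1-iu_1)(a_2-iu_2)} \iint_{\mathbb{R}^2} e^{(iu_1-a_1)x_1 + (iu_2-a_2)x_2} f(x_1,x_2)\,dx_1\,dx_2 = \frac{\phi(u_1+ia_1,u_2+ia_2)}{(a_1-iu_1)(a_2-iu_2)},
\]
the second equality being the definition \eqref{jointCharacteristicFunction_returns} evaluated at the shifted arguments. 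Applying the two-dimensional Fourier inversion theorem to $h$ and multiplying both sides by $e^{a_1 x_1 + a_2 x_2}$ then produces exactly \eqref{JointCDFFormula}.

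The main obstacle is not the algebra but verifying the analytic and integrability prerequisites. First, the admissibility of $(a_1,a_2)$ reduces to the exponential-moment condition above; for the CIR/Heston-based model of Section \ref{s:StochasticVolatilityModelForCrudeOil}, this holds for all sufficiently small $a_1,a_2 > 0$ by arguments parallel to those underpinning \eqref{MarginalCDF1Formula}--\eqref{MarginalCDF2Formula}. Second, pointwise Fourier inversion requires $\hat h \in L^1(\mathbb{R}^2)$; the purely algebraic factor $|(a_1-iu_1)(a_2-iu_2)|^{-1}$ is not integrable on its own, but the rapid decay of $\phi$ along the horizontal strip, characteristic of smooth-density diffusions, supplies the missing room. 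A clean treatment of these technicalities in one dimension is given by \citet{LeCourtoisWalter2014}, already cited in the paper for the marginal inversions; the two-dimensional extension is then a direct Fubini/tensor-product argument.
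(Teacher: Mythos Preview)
Your proposal is correct and follows essentially the same route as the paper: define the damped CDF $h(x_1,x_2)=e^{-a_1x_1-a_2x_2}G(x_1,x_2)$, compute its two-dimensional Fourier transform, identify it as $\phi(u_1+ia_1,u_2+ia_2)/[(a_1-iu_1)(a_2-iu_2)]$, and invert. The only cosmetic difference is that the paper writes $G$ as the double integral of the joint density and swaps the order of integration via Fubini, whereas you phrase the same step as integration by parts in each variable; your discussion of the integrability and exponential-moment prerequisites is actually more careful than the paper's, which simply defers these points to the univariate treatment in \citet{LeCourtoisWalter2014}.
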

The proof of this expression follows along the same lines as the one for the univariate case given by \citet{LeCourtoisWalter2014}.
It is given in appendix \ref{a1:Proofs}.

Here again, a proper choice of the smoothing parameters $a_1,a_2>0$ is needed; we find that $a_1=a_2=3$ works well in our applications.
Inversion formulas (\ref{MarginalCDF1Formula}),(\ref{MarginalCDF2Formula}) and (\ref{JointCDFFormula}) are suitable for the use of FFT methods in one and two dimensions.

\subsection{Implied Correlation for Calendar Spread Options}
\label{ss:ImpliedCorrelationForCalendarSpreadOptions}

Given the observed price of a calendar spread option it is possible to extract an implied quantity reflecting the implied level of dependence embedded in the given price.
This quantity is named implied correlation and can be defined as the parameter of the Gaussian copula that reproduces the observed price.
This definition has the advantage to be well defined in the sense that implied correlation exists as soon as the observed price is free of arbitrage.
This copula-based definition also has the advantage to disentangle the impact of marginals from the dependence structure on the price of a calendar spread option.
This notion of implied correlation is introduced and detailed in \citet{Tavin2014}.

For $\rho \in [-1, 1]$, we denote by $C^{G}_{\rho}$ the bivariate Gaussian copula with parameter $\rho$.
With special cases, for $(u_{1},u_{2})\in [0,1]^{2}$, $C^{G}_{\rho=+1}\left(u_{1},u_{2} \right) = C^{+}\left(u_{1},u_{2} \right)$ and $C^{G}_{\rho=-1}\left(u_{1},u_{2} \right) = C^{-}\left(u_{1},u_{2} \right)$,
where $C^{+}$ and $C^{-}$ are the usual upper and lower Fr\'{e}chet-Hoeffding bounds.
For definitions and general theory about copula functions we refer to \citet{Nelsen2006} and \citet{MaiScherer2012}.

When the chosen dependence structure is given by a Gaussian copula with correlation parameter $\rho$, and its marginals are $G_{1}$ and $G_{2}$,
the price of the strike $K$ calendar spread call option is denoted by $CSC^G$ and is given by, for $\rho\in ]-1,+1[$,
\begin{equation}
CSC^{G}(0,T,T_1,T_2,K,\rho) = e^{-rT} \int_{0}^{+\infty}{\left( G_{1}(x,T) - C^{G}_{\rho}\left(G_{1}(x,T),G_{2}(x+K,T) \right)\right) dx},
\label{spreadopt2}
\end{equation}
and, for $\rho = \pm 1$
\begin{align*}
CSC^{G}(0,T,T_1,T_2,K,\rho=+1) &= CSC^{+}_{0}(K) = e^{-rT} \int_{0}^{1}{\left(G^{-1}_{2}(u,T)-G^{-1}_{1}(u,T)-K\right)^{+} du}, \\
CSC^{G}(0,T,T_1,T_2,K,\rho=-1) &= CSC^{-}_{0}(K) = e^{-rT} \int_{0}^{1}{\left(G^{-1}_{2}(u,T)-G^{-1}_{1}(1-u,T)-K\right)^{+} du},
\end{align*}
where $CSC^{+}$ and $CSC^{-}$ denote the prices obtained for the calendar spread option when the chosen dependence structures are respectively $C^{+}$ and $C^{-}$.

When a calendar spread option quote is observed, it is possible to extract the implied correlation $\rho^*$ as the value of the correlation parameter to be used in (\ref{spreadopt2}) such that it reproduces the given price.
The retained notion of implied correlation does not rely on a bivariate extension of the Black-Scholes-Merton model.
Instead, it corresponds to a bivariate Gaussian copula that pairs the true underlying marginals so that the implied correlation $\rho^*$ does not imbed errors made on the marginals when using a joint log-normal model.

Let $CSC^{\text{obs}}\left(0,K,T,T_1,T_2\right)$ be an observed arbitrage-free price for the calendar spread option call written on $F_1$ and $F_2$, with maturity $T$ and strike $K$.
The associated implied correlation $\rho^*$ exists and is unique.
It can be obtained by solving numerically the equation in $\rho$ that is written
\begin{equation}
\label{impcorrdef}
CSC^{G}(0,T,T_1,T_2,K,\rho) = CSC^{\text{obs}}\left(0,K,T,T_1,T_2\right), \text{ for } \rho\in[-1,+1].
\end{equation}

For proofs and more details on the computation of implied correlation we refer to \citet{Tavin2014}.
Note that, here, the notion of implied correlation has been reviewed with calendar spread calls.
The definition can be given identically for a put option as, by put-call parity, the value is the same as for the calendar spread call option with same characteristics.
On the WTI calendar spread options market, implied correlation depends on both the strike and the maturity of options.
By analogy with implied volatility, these phenomena are referred to as implied correlation smile (or frown) and implied correlation term-structure.

\subsection{Analysis of the Dependence Structure Between two Futures}
\label{ss:AnalysisOfTheDependenceStructureBetweenTwoFutures}

As for distribution functions in our model, the dependence structure between two futures with given maturities is not readily available.
In order to analyze the dependence between futures prices at a future time horizon created by our model we need to work with the joint characteristic function.
As we have seen, it is possible to recover marginal and joint distribution functions directly from the joint characteristic function.
It is also possible to recover via analytical expressions the copula function and the copula density that characterize the dependence between two futures at the chosen time horizon.
These analytical expressions are also valid in the more general context of a two-dimensional stochastic process defined by means of its joint characteristic function.

Let $0 < T \leq T_1,T_2$ with $T$ the chosen time horizon and $T_1$, $T_2$ the expiry dates of the pair of futures.
The joint density $g(.,T,T_1,T_2)$ of $X(T)=\left(X_1(T),X_2(T) \right)$ is recovered by two-dimensional Fourier inversion of the joint characteristic function as
\begin{equation}
\label{JointPDFFormula}
g(x_1,x_2,T,T_1,T_2)=\frac{1}{4\pi^2}\int^{+\infty}_{-\infty}{\int^{+\infty}_{-\infty}{e^{-i(u_1 x_1+u_2 x_2 )}\phi(u_1,u_2,T,T_1,T_2)du_1}du_2}.
\end{equation}

The marginal densities of $X_1(T)$ and $X_2(T)$ are denoted by $g_1(.,T,T_1)$ and $g_2(.,T,T_2)$, respectively, and are recovered as
\begin{align}
g_1(x_1,T,T_1) & = \frac{1}{\pi}\int^{+\infty}_{0}{\text{Re}\left[e^{-iu x_1}\phi(u,0,T,T_1,T_2)\right]du}, \label{MarginalPDF1Formula} \\
g_2(x_2,T,T_2) & = \frac{1}{\pi}\int^{+\infty}_{0}{\text{Re}\left[e^{-iu x_2}\phi(0,u,T,T_1,T_2)\right]du}. \label{MarginalPDF2Formula}
\end{align}

The marginal and joint distribution functions, $G_1,G_2$ and $G$ are also directly recovered from $\phi$ using expressions (\ref{MarginalCDF1Formula}), (\ref{MarginalCDF2Formula}) and  (\ref{JointCDFFormula}).
The dependence structure between the futures prices can now be recovered directly from $\phi$ as the copula function between the log-returns $X_1(T)$ and $X_2(T)$.
This copula is denoted by $C(.,T)$. Note that, expressed as a copula (or a copula density), the dependence structure between the log-returns is the same as the dependence between the prices themselves.
In the remainder of the section and for readability, we drop the explicit reference to $T_1$ and $T_2$ in the expressions.

\begin{proposition}
\label{Prop:CopulaFromCharFunc}
The copula function describing the dependence between $X_1(T)$ and $X_2(T)$ and the corresponding copula density can be recovered, for $(v_1,v_2) \in [0,1]^2$, respectively, as
\begin{align}
C(v_1,v_2,T)
=& \frac{e^{a_1G^{-1}_1(v_1,T)+a_2G^{-1}_2(v_2,T)}}{4\pi^2}
\nonumber
\\
& \int^{+\infty}_{-\infty}{\int^{+\infty}_{-\infty}{ \frac{e^{-i(u_1 G^{-1}_1(v_1,T)+u_2 G^{-1}_2(v_2,T) )}\phi(u_1+ia_1,u_2+ia_2,T)}{(a_1-i u_1)(a_2-i u_2)} du_1}du_2},
\\
c(v_1,v_2,T)
=& \frac{\int^{+\infty}_{-\infty}{\int^{+\infty}_{-\infty}{e^{-i\left(u_1 G^{-1}_1(v_1,T) +u_2 G^{-1}_2(v_2,T) \right)}\phi(u_1,u_2,T)du_1}du_2} }
{\int^{+\infty}_{-\infty}{e^{-iu G^{-1}_1(v_1,T)}\phi(u,0,T)du}
\int^{+\infty}_{-\infty}{e^{-iu G^{-1}_2(v_2,T)}\phi(0,u,T)du}},
\end{align}
where $G^{-1}_1$ and $G^{-1}_2$ are the inverse cumulative distribution functions of $X_1(T)$ and $X_2(T)$.
\end{proposition}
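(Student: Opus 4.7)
The strategy is to apply Sklar's theorem to reduce both statements to previously established inversion formulas, so essentially no new analysis is required: all the hard work has been done in Lemma \ref{Lemma:JointCDFFormula} and in the univariate inversion formulas (\ref{MarginalPDF1Formula})--(\ref{MarginalPDF2Formula}) and (\ref{MarginalCDF1Formula})--(\ref{MarginalCDF2Formula}).

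For the copula function, I would start by recalling Sklar's theorem: since the marginal distribution functions $G_1(\cdot,T,T_1)$ and $G_2(\cdot,T,T_2)$ of the log-returns $X_1(T), X_2(T)$ are continuous (which follows from the existence of the densities in (\ref{MarginalPDF1Formula})--(\ref{MarginalPDF2Formula})), the copula of $(X_1(T),X_2(T))$ is the unique function $C(\cdot,\cdot,T):[0,1]^2\to[0,1]$ satisfying
\begin{equation*}
C(v_1,v_2,T)=G\bigl(G_1^{-1}(v_1,T),\,G_2^{-1}(v_2,T),\,T,T_1,T_2\bigr),\qquad (v_1,v_2)\in[0,1]^2.
\end{equation*}
Substituting $x_1=G_1^{-1}(v_1,T)$ and $x_2=G_2^{-1}(v_2,T)$ into the expression for $G$ given by Lemma \ref{Lemma:JointCDFFormula} yields the claimed formula for $C(v_1,v_2,T)$ in one line.

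For the copula density, I would differentiate the relation $C(v_1,v_2,T)=G(G_1^{-1}(v_1,T),G_2^{-1}(v_2,T),T,T_1,T_2)$ twice (once in each argument) and use $(G_k^{-1})'(v_k,T)=1/g_k(G_k^{-1}(v_k,T),T,T_k)$, obtaining
\begin{equation*}
c(v_1,v_2,T)=\frac{g\bigl(G_1^{-1}(v_1,T),\,G_2^{-1}(v_2,T),\,T,T_1,T_2\bigr)}{g_1\bigl(G_1^{-1}(v_1,T),T,T_1\bigr)\,g_2\bigl(G_2^{-1}(v_2,T),T,T_2\bigr)}.
\end{equation*}
The numerator is then replaced by the two-dimensional Fourier inversion (\ref{JointPDFFormula}) evaluated at $(G_1^{-1}(v_1,T),G_2^{-1}(v_2,T))$, and the marginal densities in the denominator are replaced by (\ref{MarginalPDF1Formula}) and (\ref{MarginalPDF2Formula}) respectively. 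Rewriting each one-sided integral $\int_0^\infty \mathrm{Re}[\,\cdot\,]du$ as $\tfrac12\int_{-\infty}^{+\infty}$ (using the Hermitian symmetry $\phi(-u,0,T)=\overline{\phi(u,0,T)}$ and $e^{iuG_1^{-1}}\overline{\cdot}$ argument, so that the $\pi$-factors combine with the $1/(4\pi^2)$ in the numerator) gives exactly the quotient of unnormalised double and single integrals displayed in the proposition.

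The only point that requires a minimum of care is verifying that the $1/\pi$ in the marginal density formulas really does combine with the $1/(4\pi^2)$ in the joint density formula so that all constants cancel in the ratio; this is the kind of bookkeeping that can be done in one line once the symmetrisation step above is made explicit. No genuine obstacle arises, since absolute integrability of the integrands (needed for Fubini in the inversion formulas) has already been assumed in the preceding inversion results of \citet{LeCourtoisWalter2014} on which Lemma \ref{Lemma:JointCDFFormula} relies.
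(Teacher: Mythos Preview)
Your proposal is correct and follows essentially the same route as the paper: invoke Sklar's theorem to write $C(v_1,v_2,T)=G(G_1^{-1}(v_1,T),G_2^{-1}(v_2,T),T)$ and substitute Lemma~\ref{Lemma:JointCDFFormula}, then use the standard copula-density identity $c=g/(g_1 g_2)$ together with (\ref{JointPDFFormula})--(\ref{MarginalPDF2Formula}). You supply a little more detail than the paper (continuity of the marginals, the Hermitian-symmetry rewriting that makes the $1/(4\pi^2)$ and $1/\pi$ factors cancel), but the argument is the same.
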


We prove this result in appendix \ref{a1:Proofs}.

The dependence structure created by the model between $X_1(T)$ and $X_2(T)$ is entirely described by the copula function of $C(.,T)$ that is recovered from $\phi$ using expressions in Proposition \ref{Prop:CopulaFromCharFunc}.
This copula function depends on the chosen time horizon $T$ and we actually have a term-structure of dependence that can be obtained from $\phi$.
The indexing by $T$ of the copula $C$ should be understood as a time-horizon, since it describes, seen from $t=0$, the distribution of the random vector $\left(G_1(X_1(T),T),G_2(X_2(T),T) \right)$.
The chosen model produces a term-structure of dependence (i.e. a term-structure of copulas) which should not be confused with a time dependent copula.

Figure \ref{Fig:copula} plots the copula function and copula density between two futures prices obtained with the SV2F model and parameters as given in Table \ref{tab:ModelParameters}.
The chosen futures respective maturities are $T_1=0.25$ years and $T_2=0.75$ years.
The chosen time horizon is $T=0.25$ years.
The obtained copula and its density appear to be smooth.
A different choice of time horizon (e.g. a shorter one) would have led to a different dependence between the two futures.

\begin{figure}[ht]
\centering
	\includegraphics[height=6.0cm]{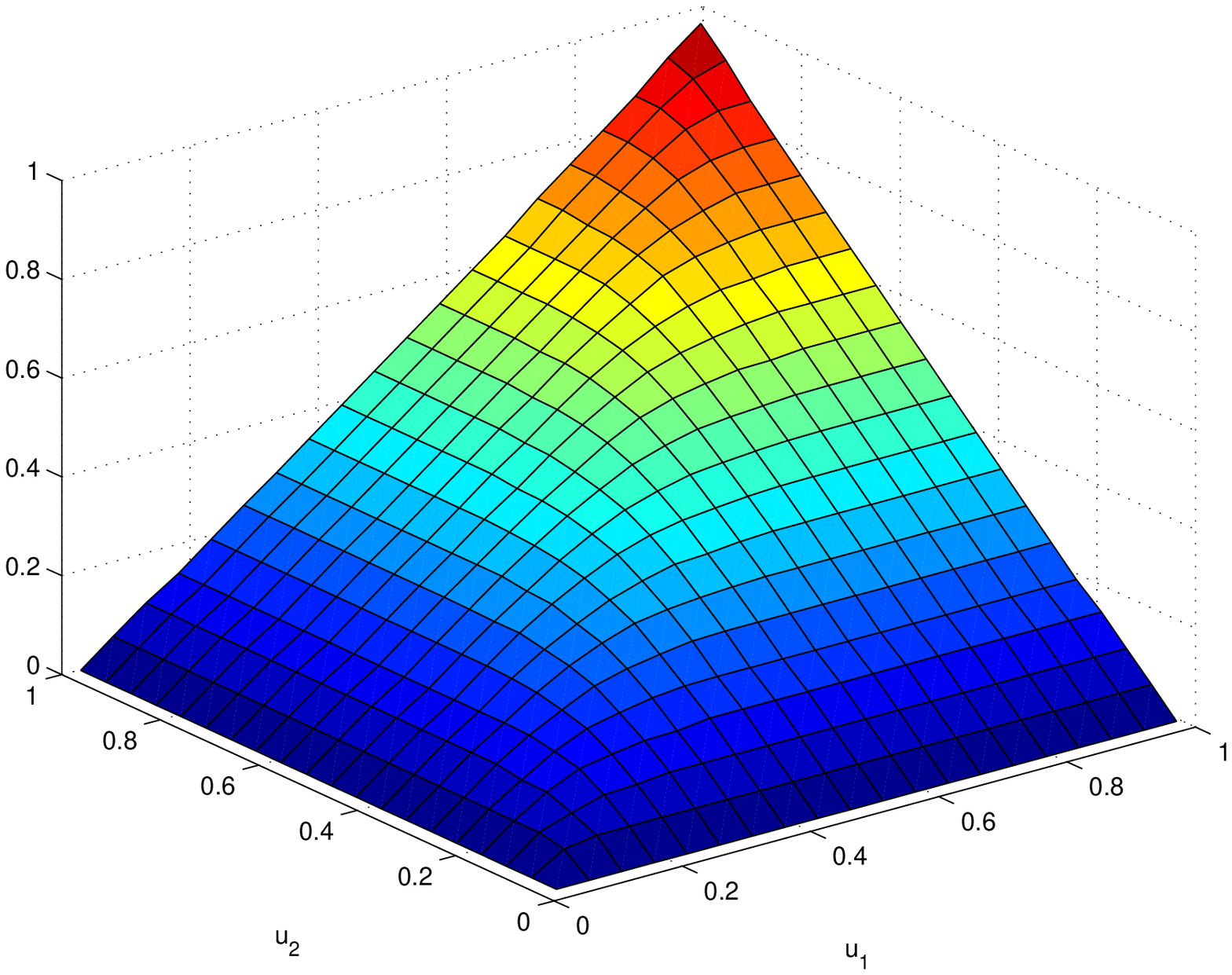} \quad
	\includegraphics[height=6.0cm]{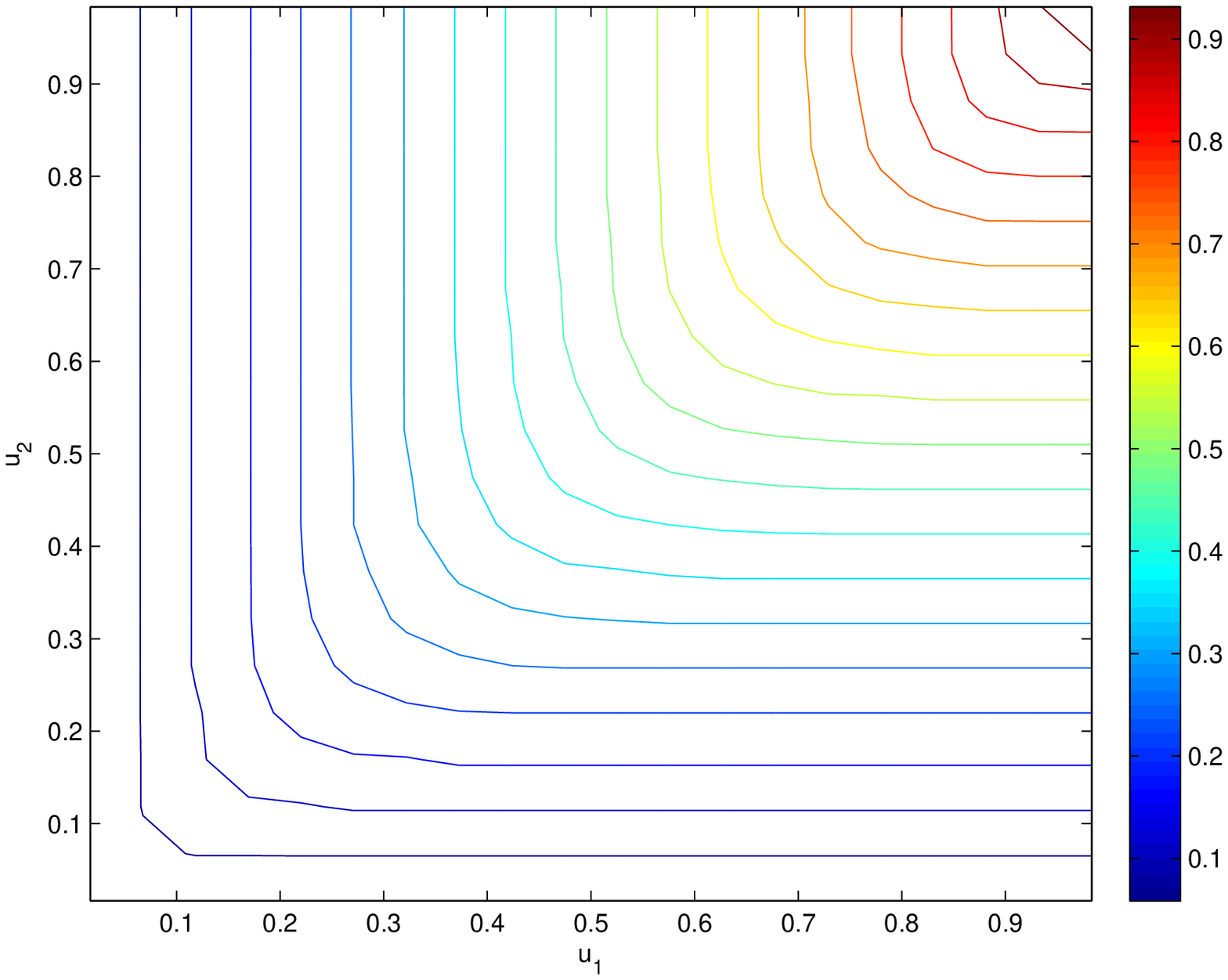}
	
	\includegraphics[height=6.0cm]{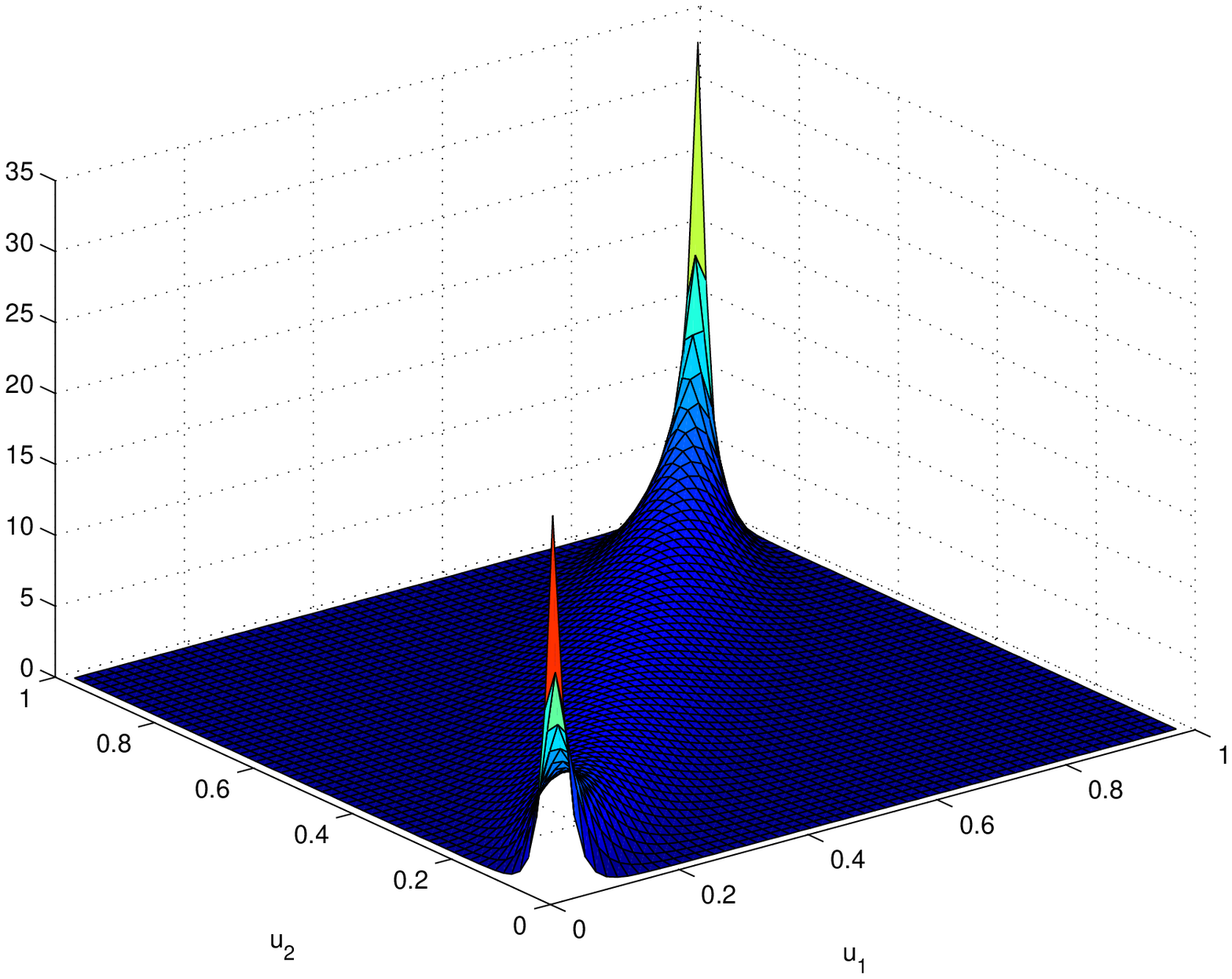} \quad
	\includegraphics[height=6.0cm]{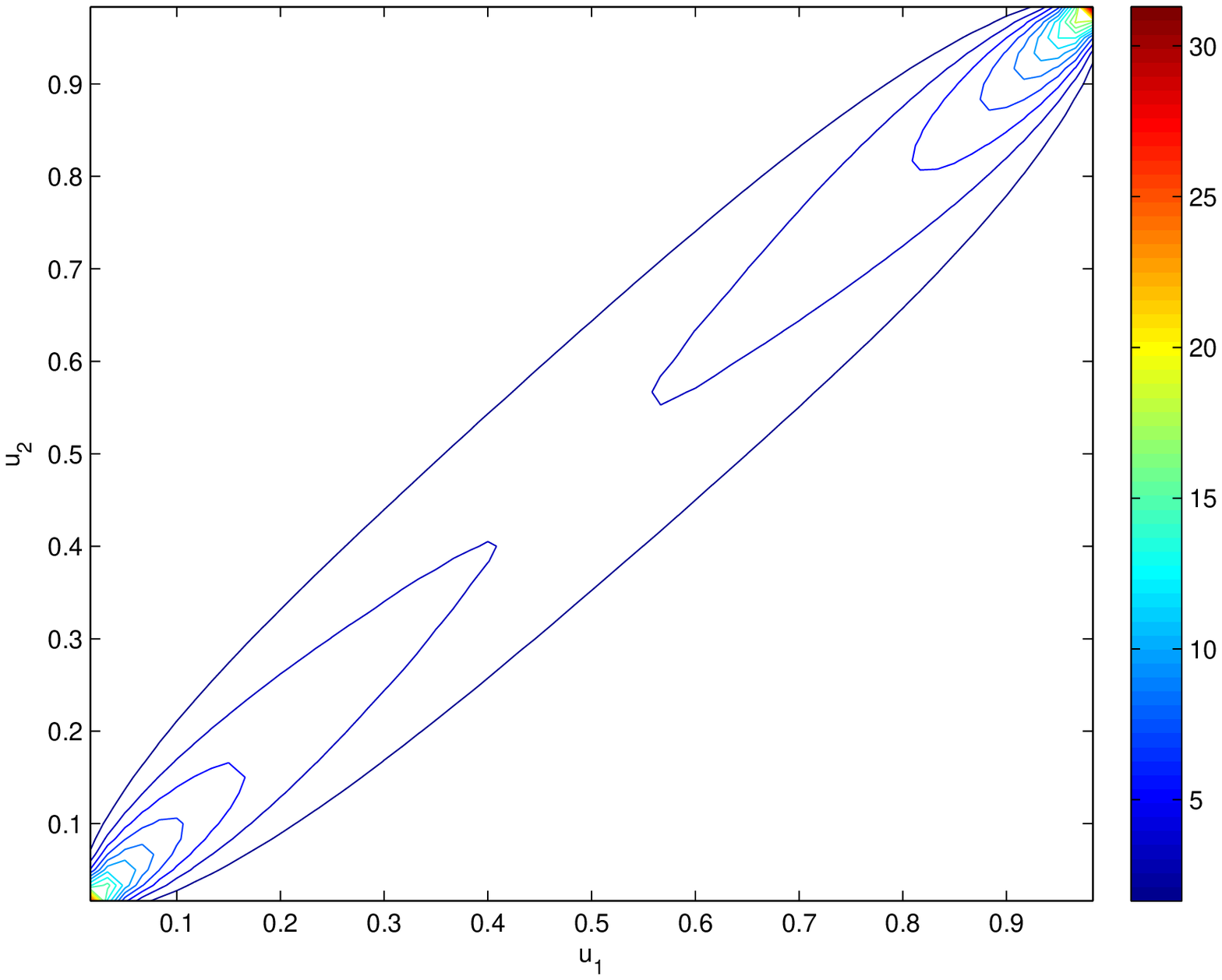}
		
	\caption{
    \label{Fig:copula}
Copula function and copula density representing the dependence structure between $F(T,T_1)$ and $F(T,T_2)$, for $T=T_1=0.25$ years and $T_2=0.75$ years,
obtained with SV2F model and parameters as given in Table \ref{tab:ModelParameters}.}
\end{figure}

To assess the dependence between $X_1(T)$ and $X_2(T)$ with a single number instead of a function one can rely on concordance and dependence measures.
Two usual concordance measures are Kendall's tau and Spearman's rho.
For $(X_1(T),X_2(T))$ these measures are denoted by $\tau_K(X_1,X_2,T)$ and $\varrho_S(X_1,X_2,T)$, respectively.
Two usual dependence measures are Schweizer-Wolf's sigma and Hoeffding's phi.
For $(X_1(T),X_2(T))$ these measures are denoted by $\sigma_{SW}(X_1,X_2,T)$ and $\Phi_H(X_1,X_2,T)$, respectively.
Concordance and dependence measures can be expressed as functions of the copula of $(X_1(T),X_2(T))$.
The four measures mentioned here can be written as
\begin{align}
\tau_K(X_1,X_2,T) & = 4 \iint_{[0,1]^2}{C(v_1,v_2,T)c(v_1,v_2,T)dv_1dv_2} -1, \\
\varrho_S(X_1,X_2,T) & = 12 \iint_{[0,1]^2}{C(v_1,v_2,T)dv_1dv_2} -3, \\ 
\sigma_{SW}(X_1,X_2,T) & = 12 \iint_{[0,1]^2}{\mid C(v_1,v_2,T) - C^{\bot}(v_1,v_2) \mid dv_1 dv_2}, \\
\Phi_H(X_1,X_2,T) & = 90 \iint_{[0,1]^2}{\mid C(v_1,v_2,T) - C^{\bot}(v_1,v_2) \mid^2 dv_1 dv_2},
\end{align}
where $C^{\bot}(v_1,v_2)=v_1 v_2$ is the product copula associated with independence between variables.
We refer to \citet{Nelsen2006} for more details and properties of these measures.

\section{Calibration to Market Data and Empirical Considerations}
\label{s:Calibration}

In this section we consider empirical data for WTI.
We calibrate the two factor version of our model on different dates corresponding to different market situations.
We then analyze the term-structure of dependence produced by the calibrated model and the implied correlations obtained when pricing calendar spread options.

\subsection{Data}
\label{ss:Data}

For empirical applications, we use three sets of WTI market data.
Each dataset corresponds to a cross-section of futures and options closing prices taken on a given date.
We have chosen three dates as representatives of different market situations.
The first date is December 10, 2008 and can be considered as taken during the financial crisis period, as it was just months after the default of Lehman Brothers.
On that date, implied volatilities of WTI vanilla options were above 80\% for the shortest maturities and the OVX index was calculated above 90\%.
The second date is March 9, 2011 and corresponds to a market that is recovering from the deepest states of the crisis.
The third considered date is April 9, 2014 that can be seen as a ``\textit{back to normal}'' market situation, at least from the standpoint of market prices.

Futures and options on WTI are traded and quoted on the NYMEX.
Interest rates data and closing prices for futures as well as vanilla and calendar spread options were obtained from Bloomberg and Datastream. 

\subsection{Calibration to Vanilla Options}
\label{ss:CalibrationToVanillaOptions}

Models considered in this paper, namely SV2F (two-factor version of the proposed stochastic volatility model) and CS2F (two-factor version of the Clewlow-Strickland model),
can be fitted to a cross-section of observed vanilla options prices.
For each dataset we calibrate these models by minimizing the sum of squared errors between model and observed prices.
This is equivalent to a calibration to implied volatilities.
For a given dataset, the calibrated model parameter set $\theta^*$ is obtained as
\begin{equation}
\theta^* = \text{arg}\min_{\theta \in \Theta}\sum^{N_T}_{i=1}{\sum^{N_K}_{j=1}{\left(O(K_j,T_i,T_i;\theta) - O^{Obs}(K_j,T_i,T_i) \right)^2}},
\end{equation}
where $\Theta$ is the set of feasible model parameters, $N_T$ the number of maturities in the options set,
$N_K$ the number of strikes for each maturity (without loss of generality we consider the same number of strikes to be available for each maturity).
$O(.;\theta)$ denotes the option price obtained using the chosen model with parameter $\theta$ and $O^{Obs}(.)$ denotes the corresponding observed price.
In the considered datasets we work with five maturities, ranging from two months to four years (hence $N_T=5$), and seven strikes for each maturity, that are specified
in terms of moneyness with respect to the corresponding futures price.
Specifically, these strikes are 60\%, 80\%, 90\%, 100\%, 110\%, 120\% and 150\% (hence $N_K=7$).

Once the minimization programs have been solved, it is possible to measure the quality of the obtained calibration.
The fit quality can be measured as mean absolute error or root mean squared error on prices.
These errors are calculated as
\begin{align*}
MAE & = \sum^{N_T}_{i=1}{\sum^{N_K}_{j=1}{\frac{\left|O(K_j,T_i;\theta^*) - O^{Obs}(K_j,T_i) \right|}{N_K N_T}}},  \\
RMSE & = \sqrt{\sum^{N_T}_{i=1}{\sum^{N_K}_{j=1}{\frac{\left(O(K_j,T_i;\theta^*) - O^{Obs}(K_j,T_i) \right)^2}{N_K N_T}}}}.
\end{align*}

The same measures of fit can be applied to implied volatilities instead of prices.
Table \ref{tab:calibration_fit} presents, for each dataset, these error measures obtained with the calibrated models.
Due to the presence of implied volatility smiles along the strike-axis, the CS2F model is not able to closely match the observed prices.
In contrast, the SV2F model can provide a proper fit to both the strike-structure and the term-structure of implied volatilities.
The error measures obtained with SV2F model appear to be around half a volatility point for MAE and around three quarters of a point for RMSE,
which can be regarded as very good given the large strike and maturity spans of the options sets.

\begin{table}[ht]
  \centering
	\begin{tabular}{ccccccc}
	\toprule
		\phantom{ab} & \multicolumn{2}{c}{MAE} & \multicolumn{2}{c}{MAE (ATM)} & \multicolumn{2}{c}{RMSE} \\
		\cmidrule(lr){2-3} \cmidrule(lr){4-5} \cmidrule(lr){6-7}
		Date & Price & Vol. & Price & Vol. & Price & Vol. \\
		\midrule \\
		\multicolumn{4}{l}{\textit{CS2F Model}} \\
		Dec. 2008 &	$0.4607$ & $0.0156$ & $0.5205$ & $0.0148$ & $0.5827$ & $0.0180$ \\
		Mar. 2011 & $0.5394$ & $0.0169$ & $0.3941$ & $0.0082$ & $0.6803$ & $0.0243$ \\
		Apr. 2014 &	$0.2884$ & $0.0162$ & $0.2644$ & $0.0074$ & $0.3616$ & $0.0226$ \\ \\
		\multicolumn{4}{l}{\textit{SV2F Model}} \\
		Dec. 2008 &	$0.1278$ & $0.0052$ & $0.1627$ & $0.0062$ & $0.1777$ & $0.0066$ \\
		Mar. 2011 & $0.2071$ & $0.0056$ & $0.2419$ & $0.0056$ & $0.2922$ & $0.0078$ \\
		Apr. 2014 & $0.1113$ & $0.0043$ & $0.0965$ & $0.0026$ & $0.1517$ & $0.0059$ \\ \\
		\bottomrule
	\end{tabular}
\caption{MAE and RMSE error measures, of prices and implied volatilities, for SV2F and CS2F models calibrated to vanilla option prices.
\textit{Left panel} is for MAE on the whole matrix of options, \textit{central panel} is for MAE on at-the-money options and \textit{right panel} is for RMSE on the whole matrix of options.}
\label{tab:calibration_fit}
\end{table}


\subsection{Results}
\label{ss:Results}

Figure \ref{Fig:clibratedimpliedvol} plots, for each dataset, implied volatility corresponding to the market data and implied volatility obtained with the SV2F model calibrated
to vanilla option prices.
Plots in the left column give evidence of the implied volatility smile in our datasets.
It can also be noted that the convexity and skew (at-the-money slope) of these smiles vary with maturity.
This maturity effect is particularly present in March 2011 data. Plots in the right column represent at-the-money volatility term-structure.
They provide evidence for the empirical Samuelson volatility effect in the market prices of at-the-money vanilla options.
Figure \ref{Fig:clibratedimpliedvol} shows that our model, once calibrated to vanilla option prices,
is able to properly reproduce the empirical stylized facts for implied volatility, namely presence and maturity-dependency of the smile and Samuelson effect.

Figure \ref{Fig:dependencemeasures1} plots, for the datasets of December 2008 and March 2011,
measures of concordance and dependence between two futures prices produced by the calibrated SV2F model.
This figure represents a type of dependence term-structure that is of interest for WTI futures market participants.
It corresponds to the case where the time horizon and the first futures expiry are both held constant, while the difference between futures expiries varies.
We observe that, as the difference between expiries increases, the pair of futures becomes less dependent which is line with the intuition one can have \textit{a priori}.
%
We call this phenomenon the \textit{Samuelson correlation effect}.
This phenomenon is a desirable feature for a model to be used by a price maker quoting and trading a range of products written on WTI.
The presented empirical applications show it is properly reproduced by the proposed model.


Figure \ref{Fig:dependencemeasures2} plots, for the datasets of December 2008 and March 2011,
measures of concordance and dependence between two futures prices produced by the calibrated SV2F model.
It corresponds to the case where the time horizon varies while the difference between futures expiries is held constant.
For the December 2008 case, the level of dependence between the futures is little affected by the time-horizon.
For the March 2011 case, as the time-horizon increases, the pairs of futures with $6$-month difference between their maturities become more dependent.
Here the intuition does not lead to a particular structure that is desirable, namely increasing or decreasing as the time horizon varies.

Figure \ref{Fig:clibratedimpliedcorrel} plots the implied correlation strike and maturity structures from spread option prices
obtained with the SV2F model calibrated to the datasets of December 2008 and March 2011.
The considered spread options have fixed maturity and first futures expiry, while the difference between the two underlying futures expiries varies.
For each pair of underlying futures, the five strikes correspond to a set of shifts applied to the at-the-money spread $F(0,T_1)-F(0,T_2)$.
These shifts are $-10,-5,-2.5,0,+2.5,+5$ and $+10$.
We observe that the obtained term-structures are decreasing.
This observation is in line with the intuition and with what is observed in Figure \ref{Fig:dependencemeasures1}, i.e. with the \textit{Samuelson correlation effect}.
For the March 2011 case, the model produces a non-constant strike structure of implied correlation.
We observe that for larger strikes (out of the money call spread options) implied correlations are lower than at the money.
Lower implied correlations in turn correspond to higher option prices.
For the December 2008 case, the model produces a rather flat strike structure of implied correlation.
Hence, in this case, the prices produced by the model are close to prices that could have been produced using Gaussian copulas for the dependence between futures prices.

Figure \ref{Fig:clibratedimpliedcorrel2} plots the implied correlation strike and maturity structures from spread option prices
obtained with the SV2F model calibrated to the datasets of December 2008 and March 2011.
The considered spread options have maturity and first futures expiry that vary, while the difference between the two underlying futures expiries is held constant.
For each pair of underlying futures, the five strikes correspond to a set of shifts applied to the at-the-money spread $F(0,T_1)-F(0,T_2)$.
These shifts are the same as for Figure \ref{Fig:clibratedimpliedcorrel}.
We observe that the obtained term-structures are decreasing.
We observe that the obtained term-structure for December 2008 is rather flat which is consistent
with the term-structure of concordance and dependence presented in Figure \ref{Fig:dependencemeasures2}.
For the March 2011 case, the implied correlation term-structure is increasing
which is again consistent with concordance and dependence measures in Figure \ref{Fig:dependencemeasures2}.
For the March 2011 case, the model produces a non-constant strike structure of implied correlation.
For the December 2008 case, the model produces a rather flat strike structure of implied correlation.
These strike structures are similar to those found in Figure \ref{Fig:clibratedimpliedcorrel} and the same comments apply.


\begin{figure}[ht]
\centering
	\includegraphics[height=6.0cm]{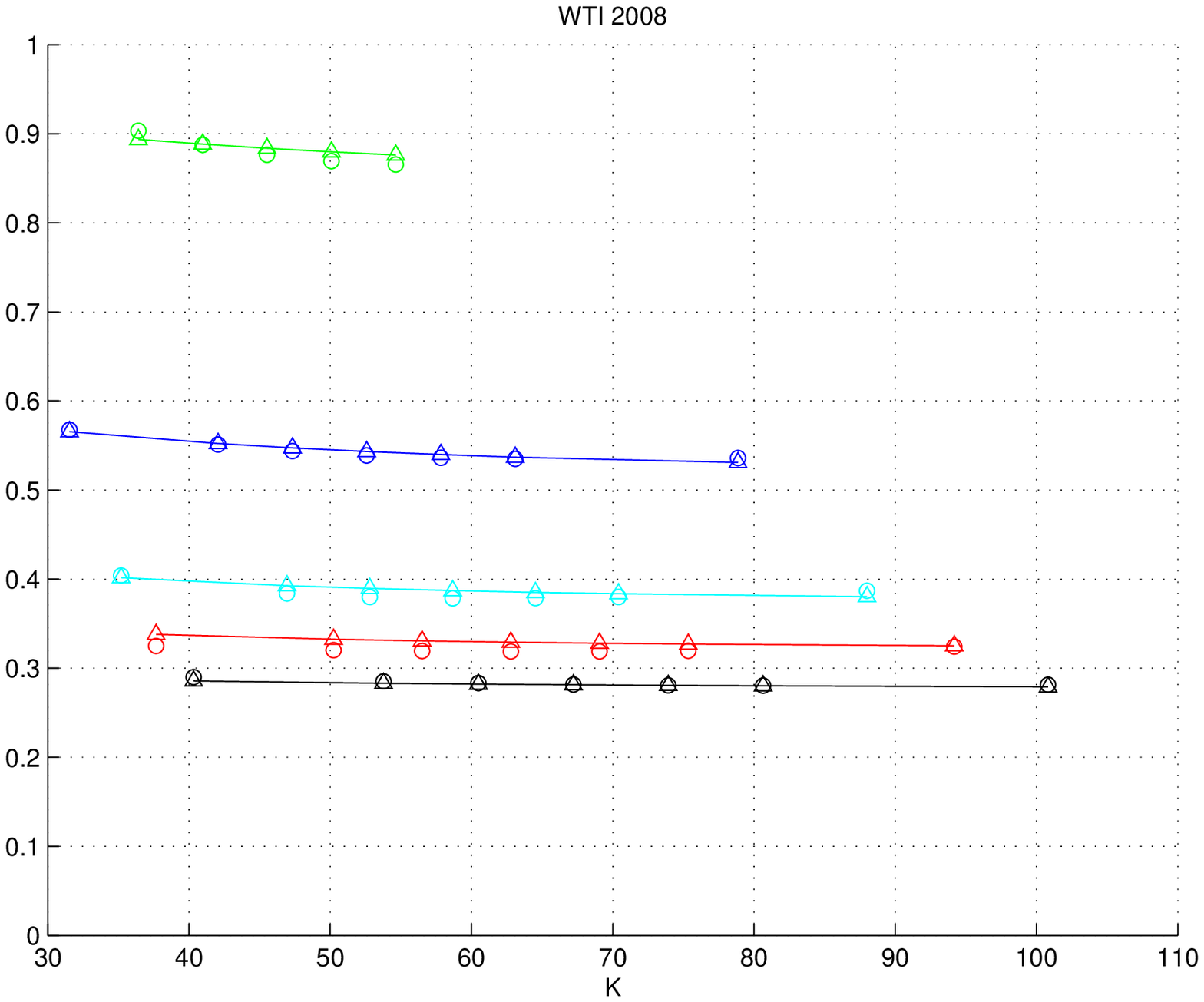} \quad
	\includegraphics[height=6.0cm]{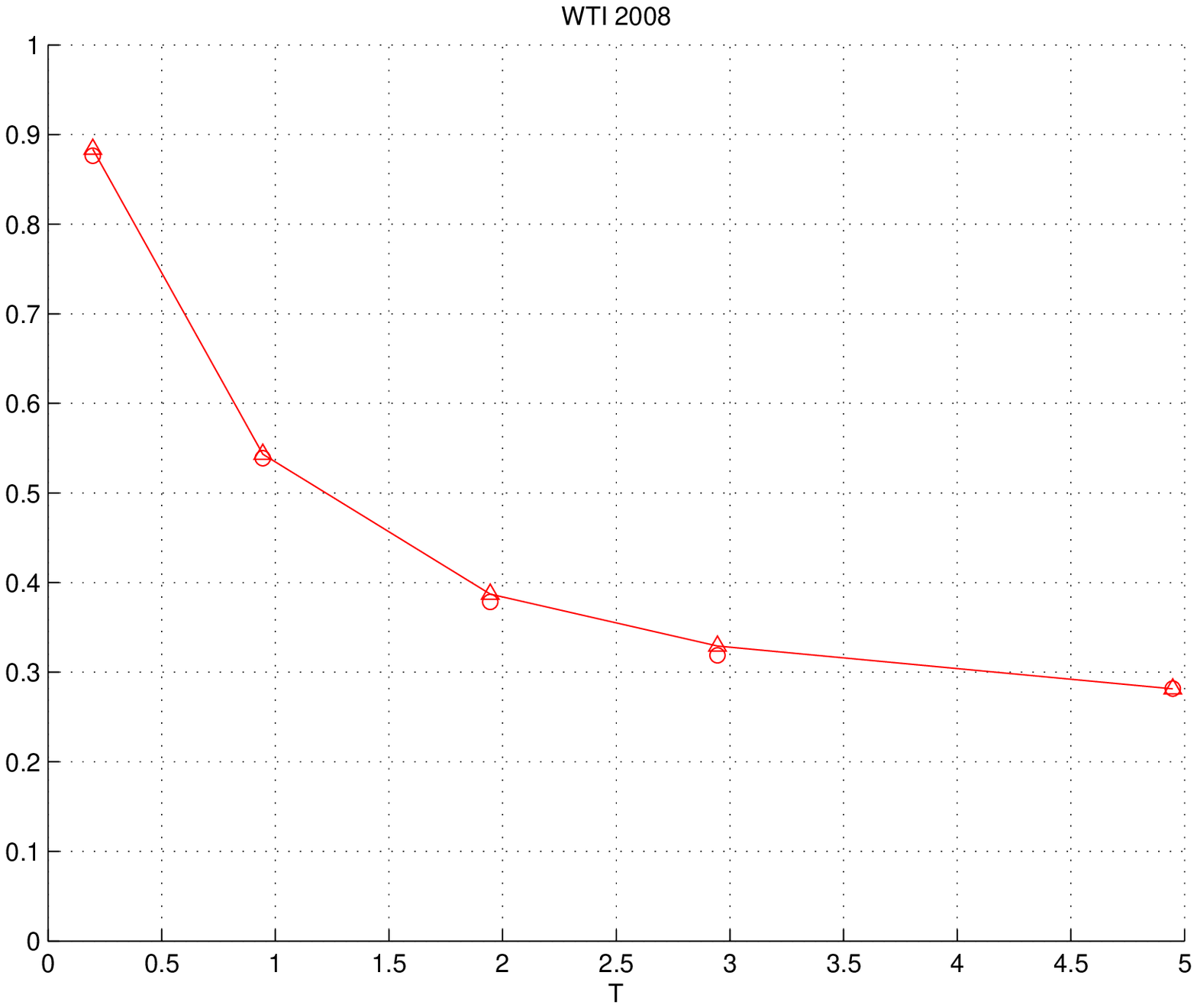}
	
	\includegraphics[height=6.0cm]{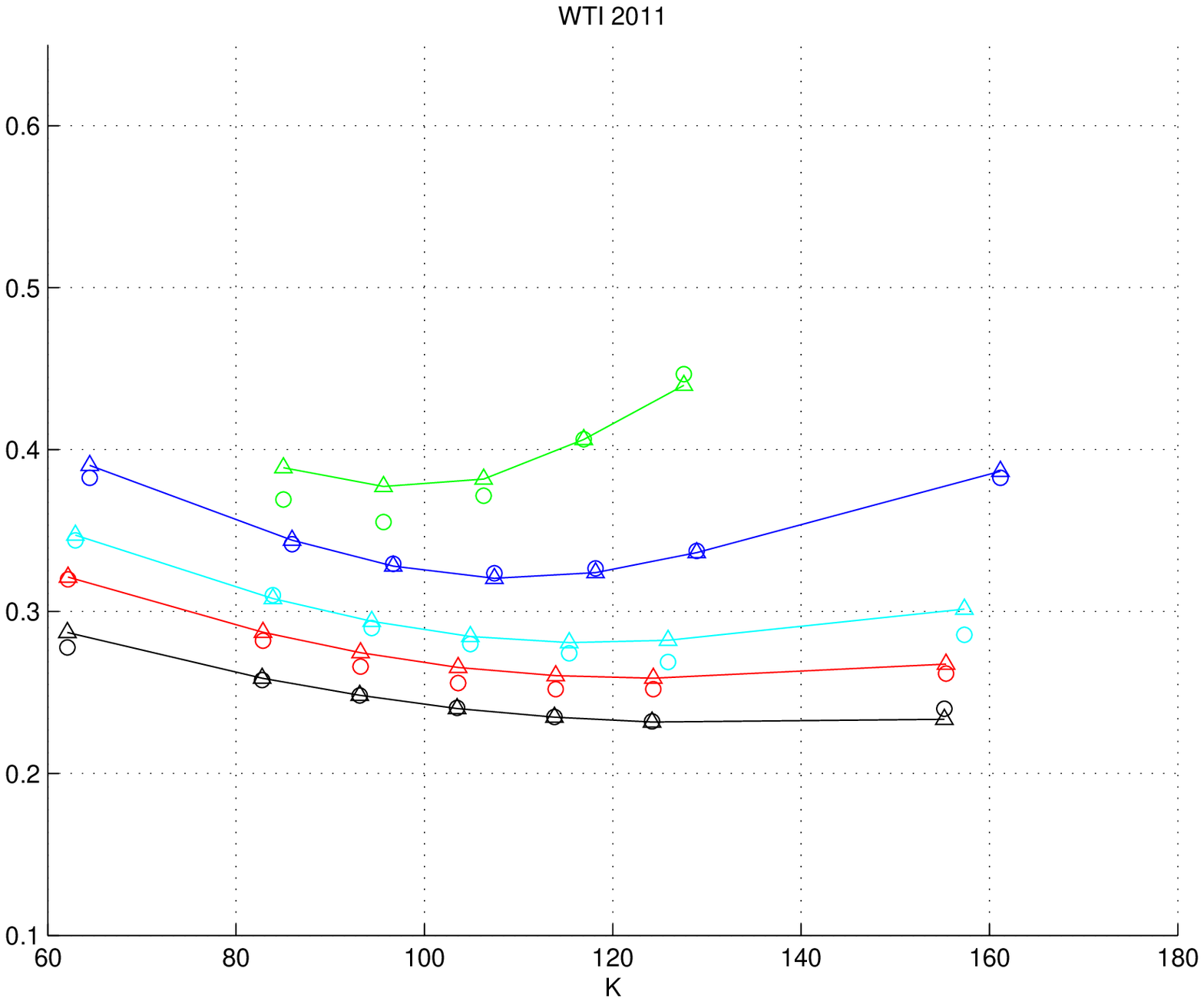} \quad
	\includegraphics[height=6.0cm]{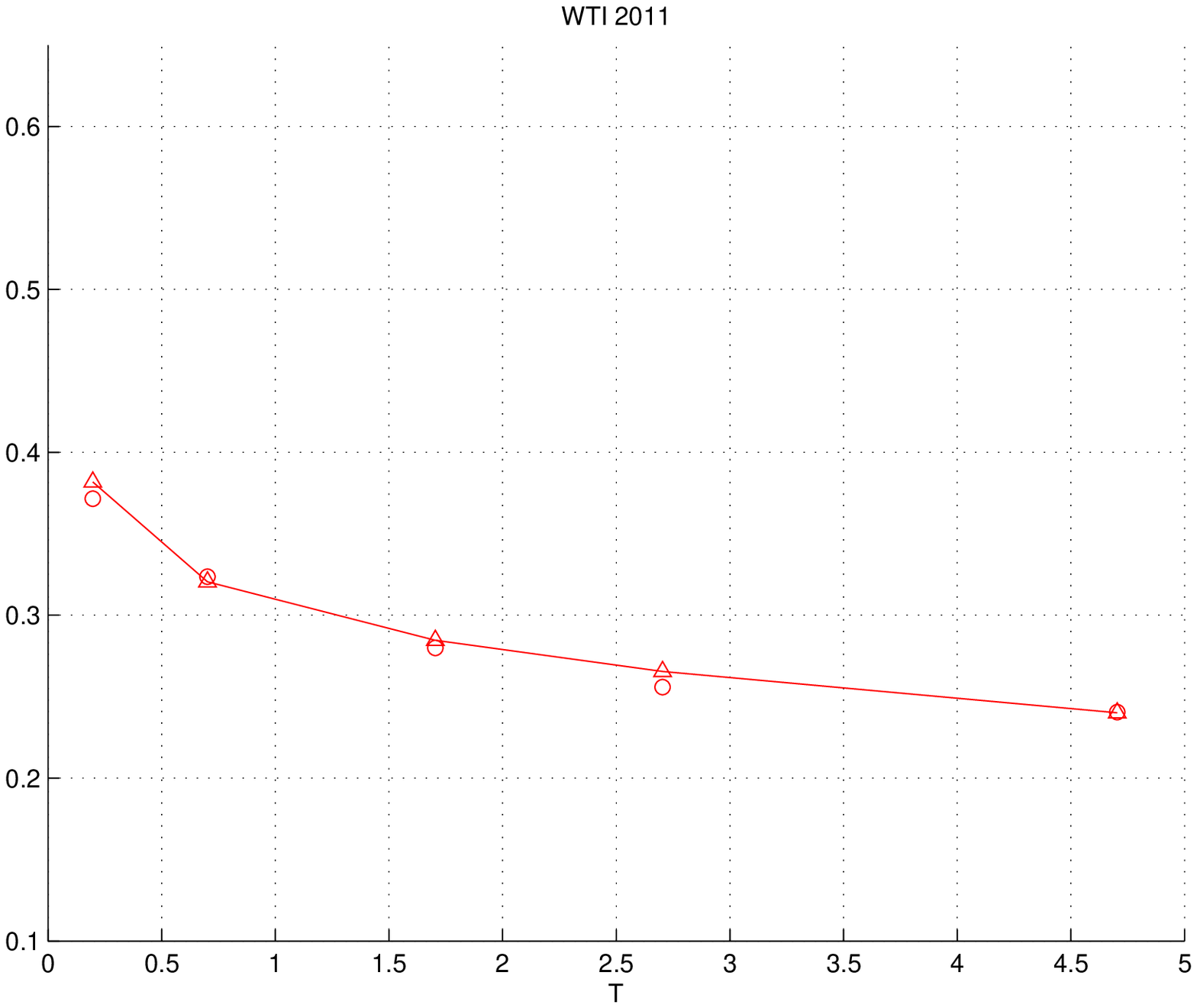}
	
	\includegraphics[height=6.0cm]{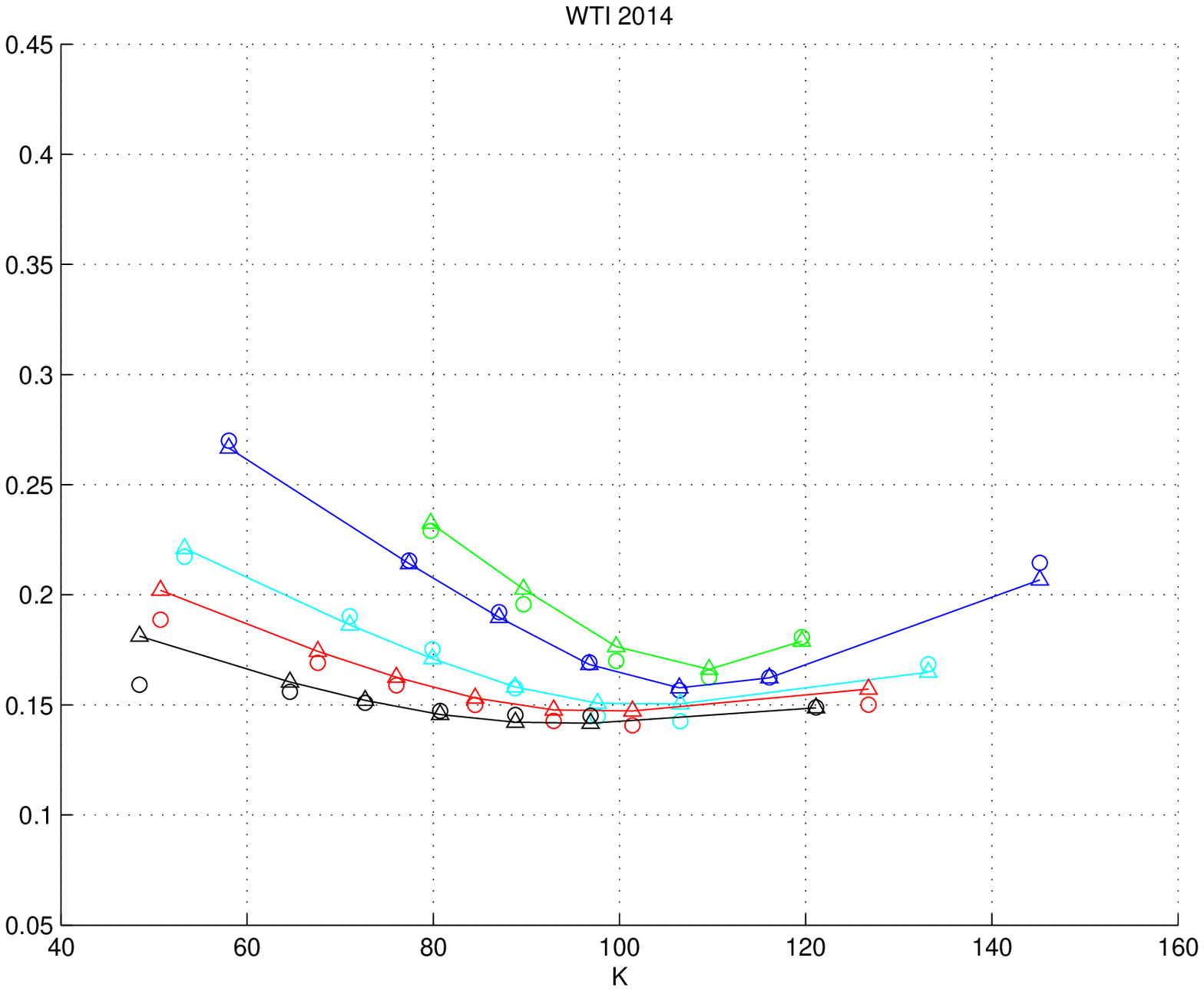} \quad
	\includegraphics[height=6.0cm]{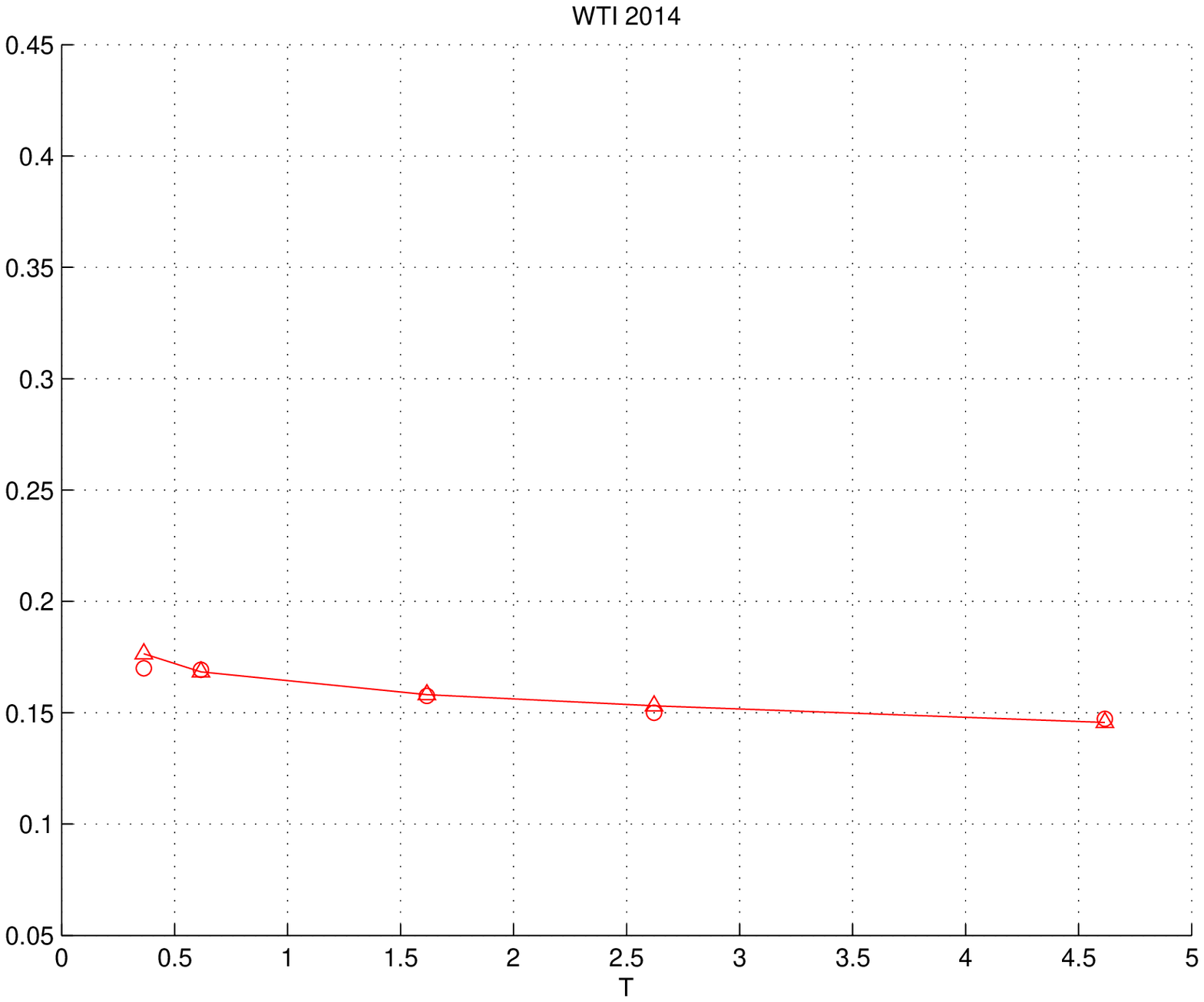}
	
	\caption{\label{Fig:clibratedimpliedvol}
Implied volatilities corresponding to market quotes and obtained with SV2F model calibrated to vanilla options.
\textit{Left column}: implied volatility smiles.
\textit{Right column}: at-the-money implied volatility term-structure.}
\end{figure}



\begin{figure}[ht]
\centering
	\includegraphics[height=7.0cm]{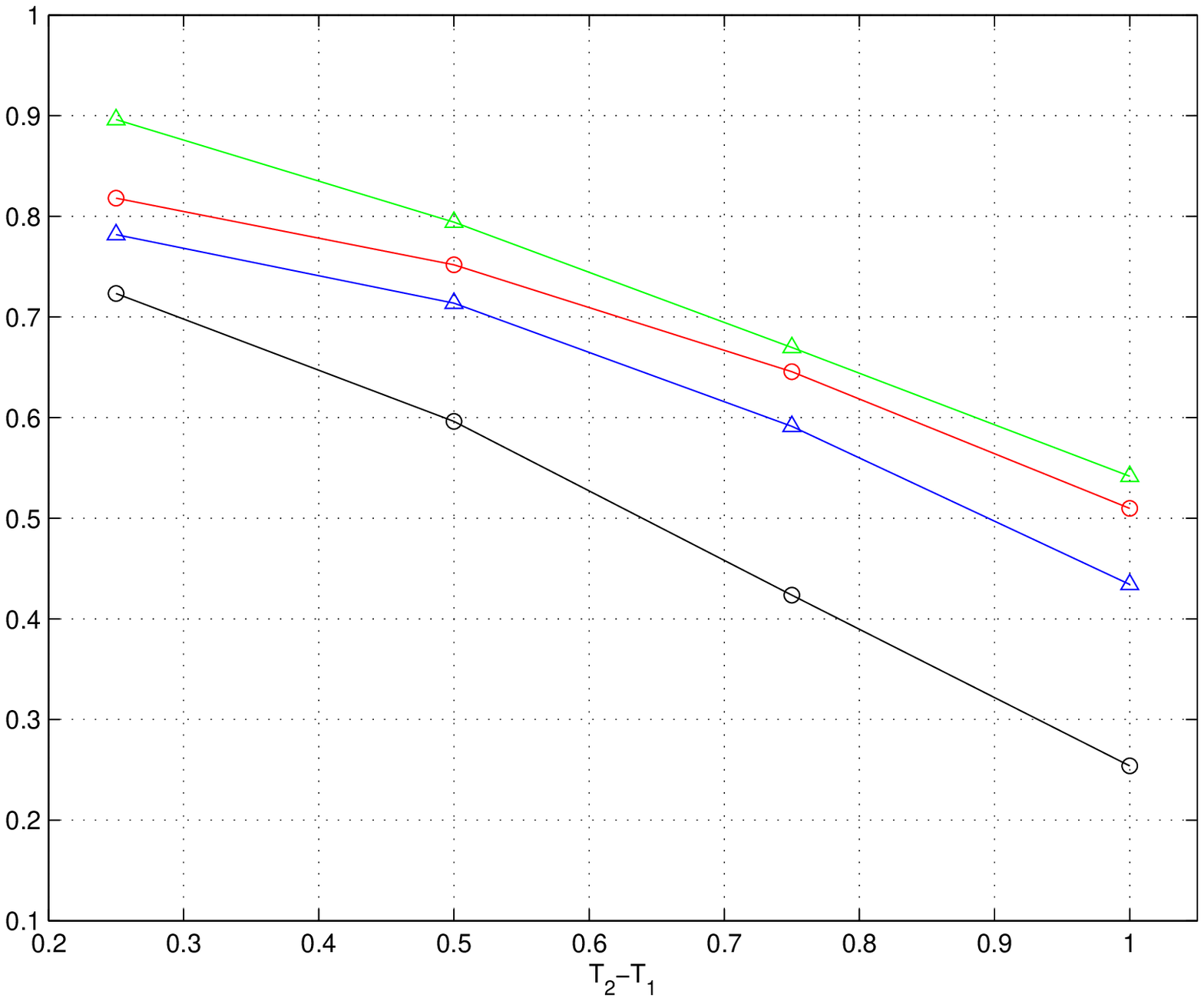}
	
	\includegraphics[height=7.0cm]{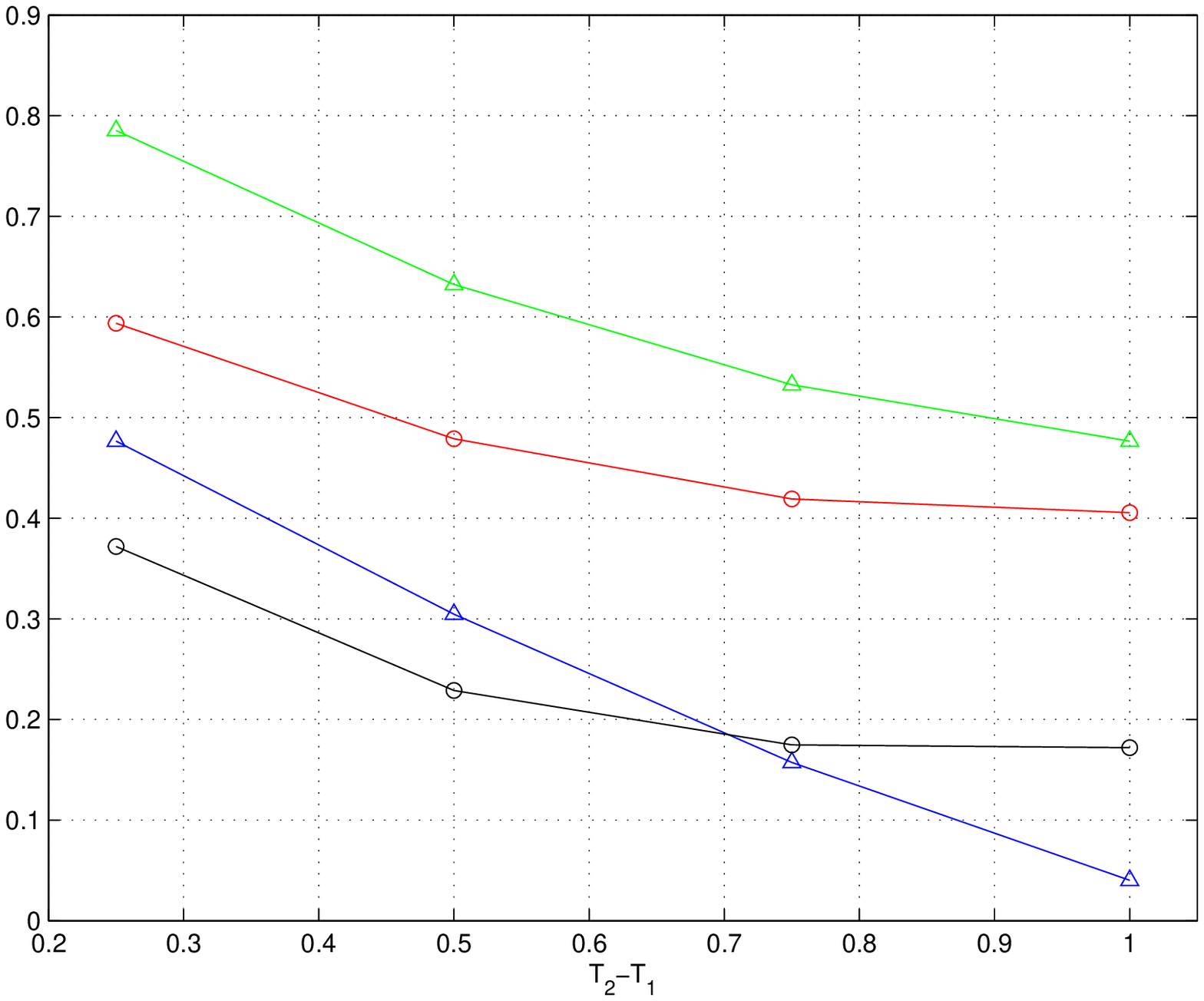}
		
	\caption{\label{Fig:dependencemeasures1}
Term-structure of concordance and dependence measures between $F(T,T_1)$ and $F(T,T_2)$ produced by the SV2F model calibrated market data.
\textit{Upper panel} corresponds to December 2008 data and \textit{lower panel} corresponds to March 2011 data.
Time horizon $T$ and first futures expiry $T_1$ are fixed at $3$ months, $T_2 - T_1$ ranges from $3$ months to $1$ year.
Dependence measures are $\tau_K$ and $\varrho_S$ (respectively, \textit{red} and \textit{green} lines).
Concordance measures are $\sigma_{SW}$ and $\Phi_H$ (respectively, \textit{red} and \textit{black} lines).}
\end{figure}

\begin{figure}[ht]
\centering
	\includegraphics[height=7.0cm]{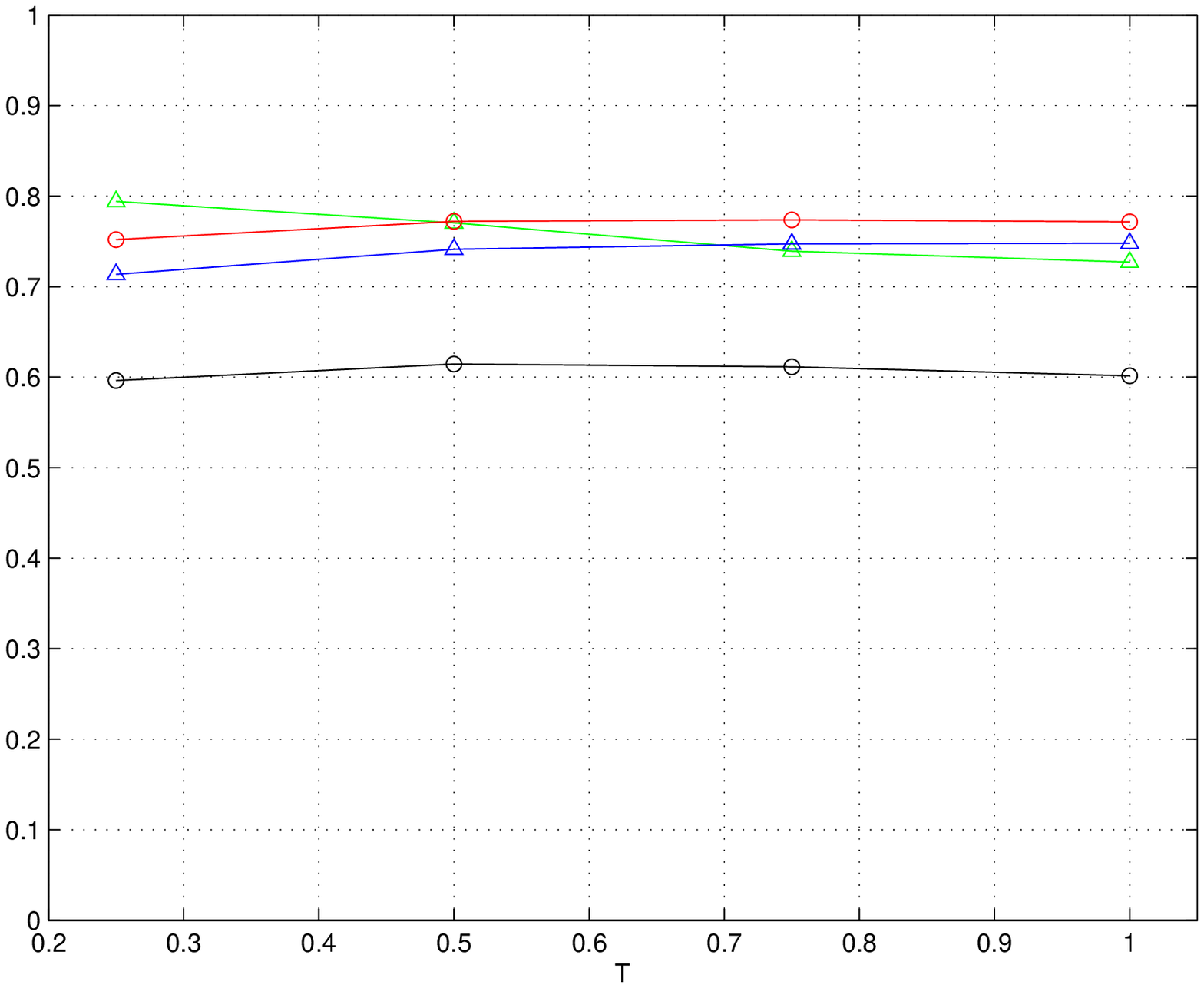}
	
	\includegraphics[height=7.0cm]{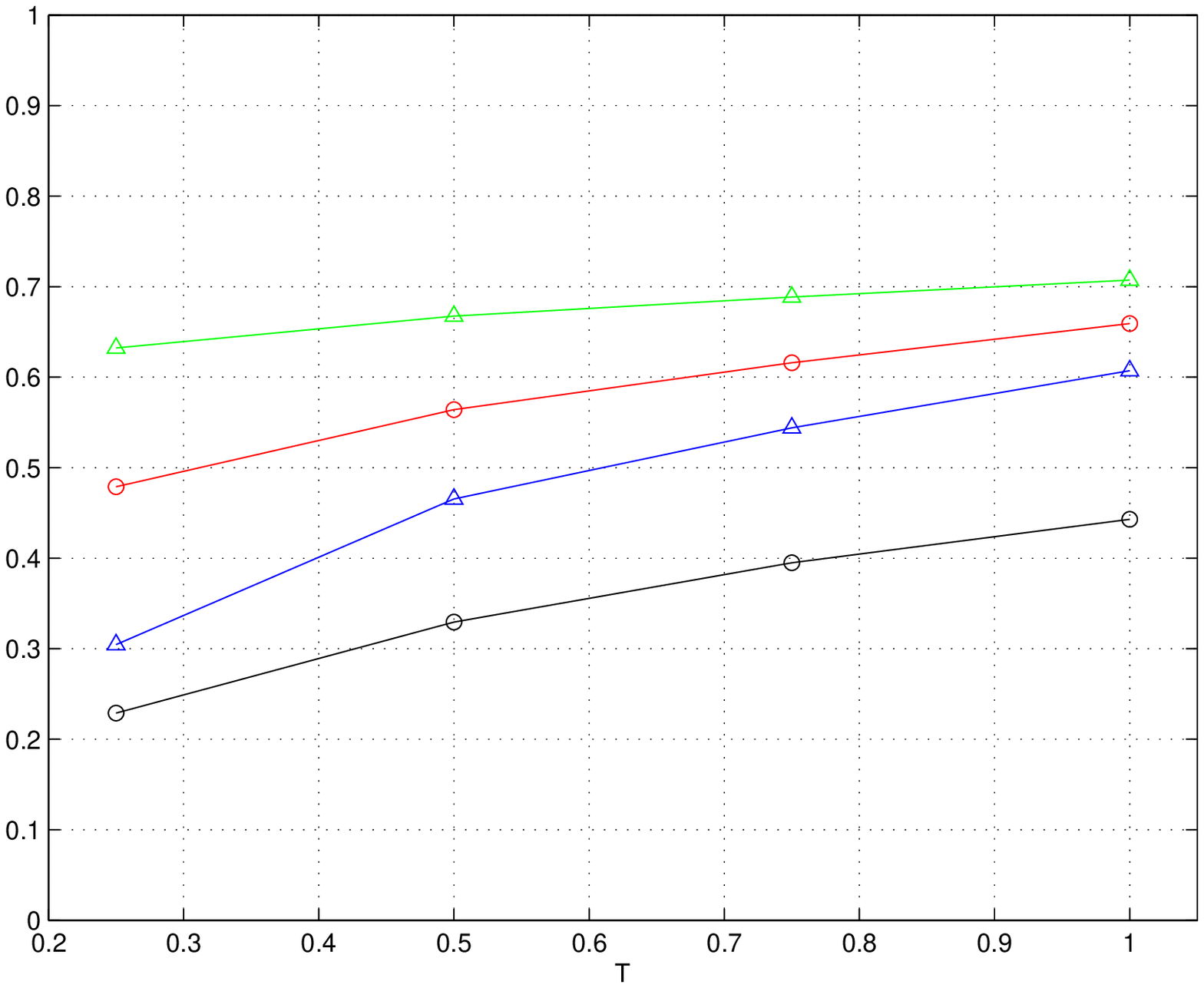}
		
	\caption{\label{Fig:dependencemeasures2}
Term-structure of concordance and dependence measures between $F(T,T_1)$ and $F(T,T_2)$ produced by the SV2F model calibrated market data.
\textit{Upper panel} corresponds to December 2008 data and \textit{lower panel} corresponds to March 2011 data.
Time horizon $T$ and first futures expiry $T_1$ range from $3$ months to $1$ year.
$T_2 - T_1$ is held fixed at $6$ months.
Dependence measures are $\tau_K$ and $\varrho_S$ (respectively, \textit{red} and \textit{green} lines).
Concordance measures are $\sigma_{SW}$ and $\Phi_H$ (respectively, \textit{red} and \textit{black} lines).}
\end{figure}



\begin{figure}[ht]
\centering
	\includegraphics[height=6.0cm]{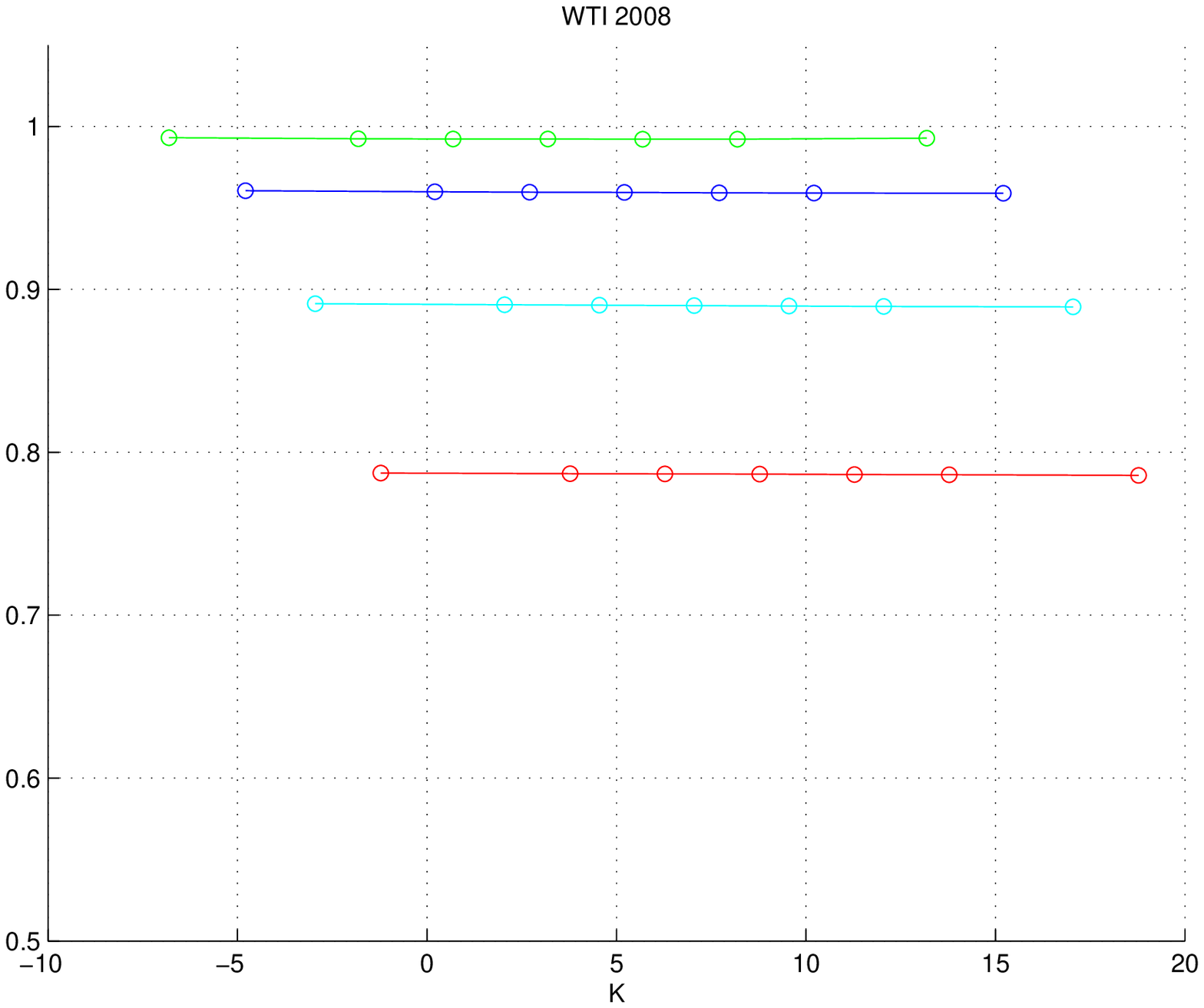} \quad
	\includegraphics[height=6.0cm]{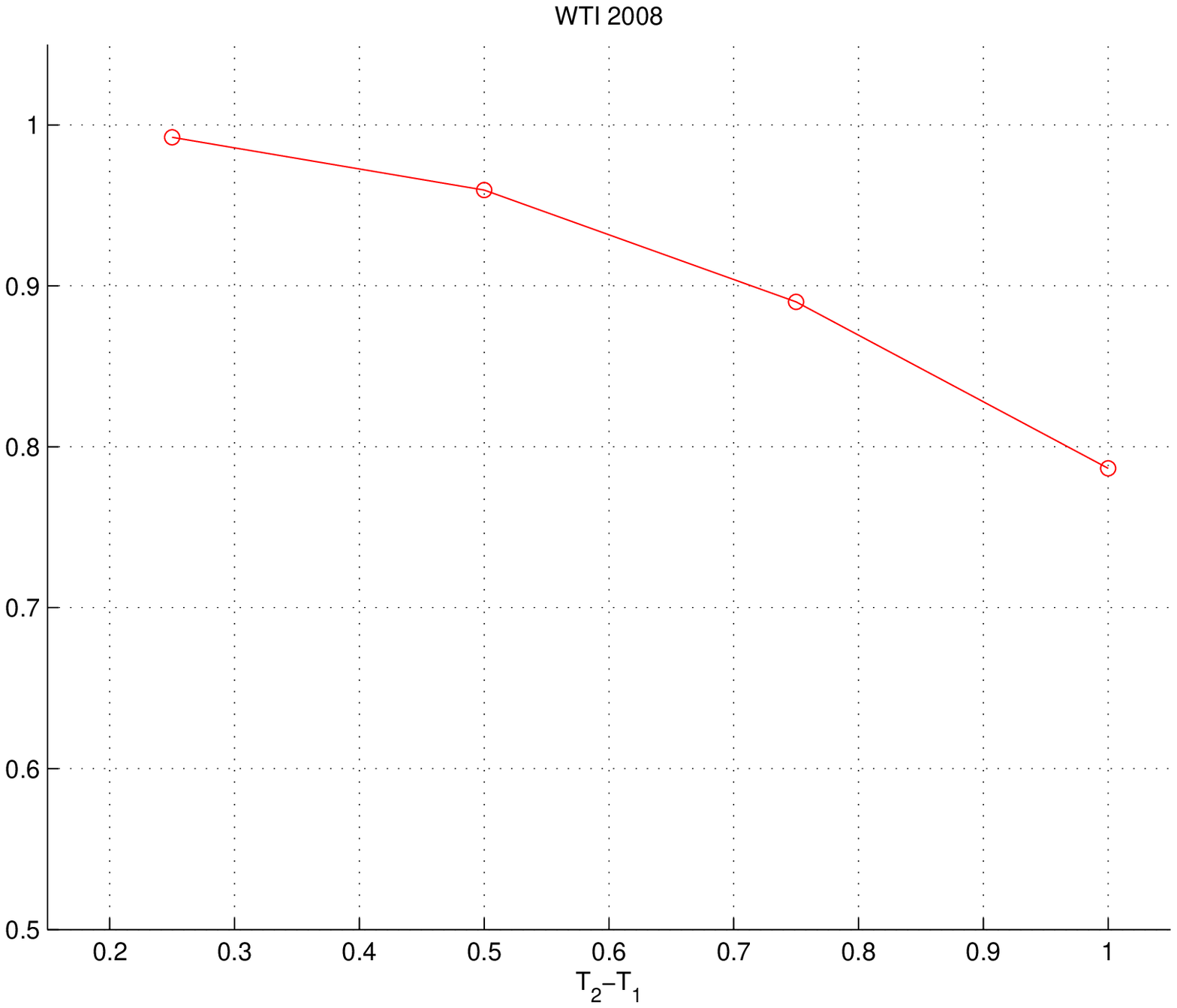}
	
	 \includegraphics[height=6.0cm]{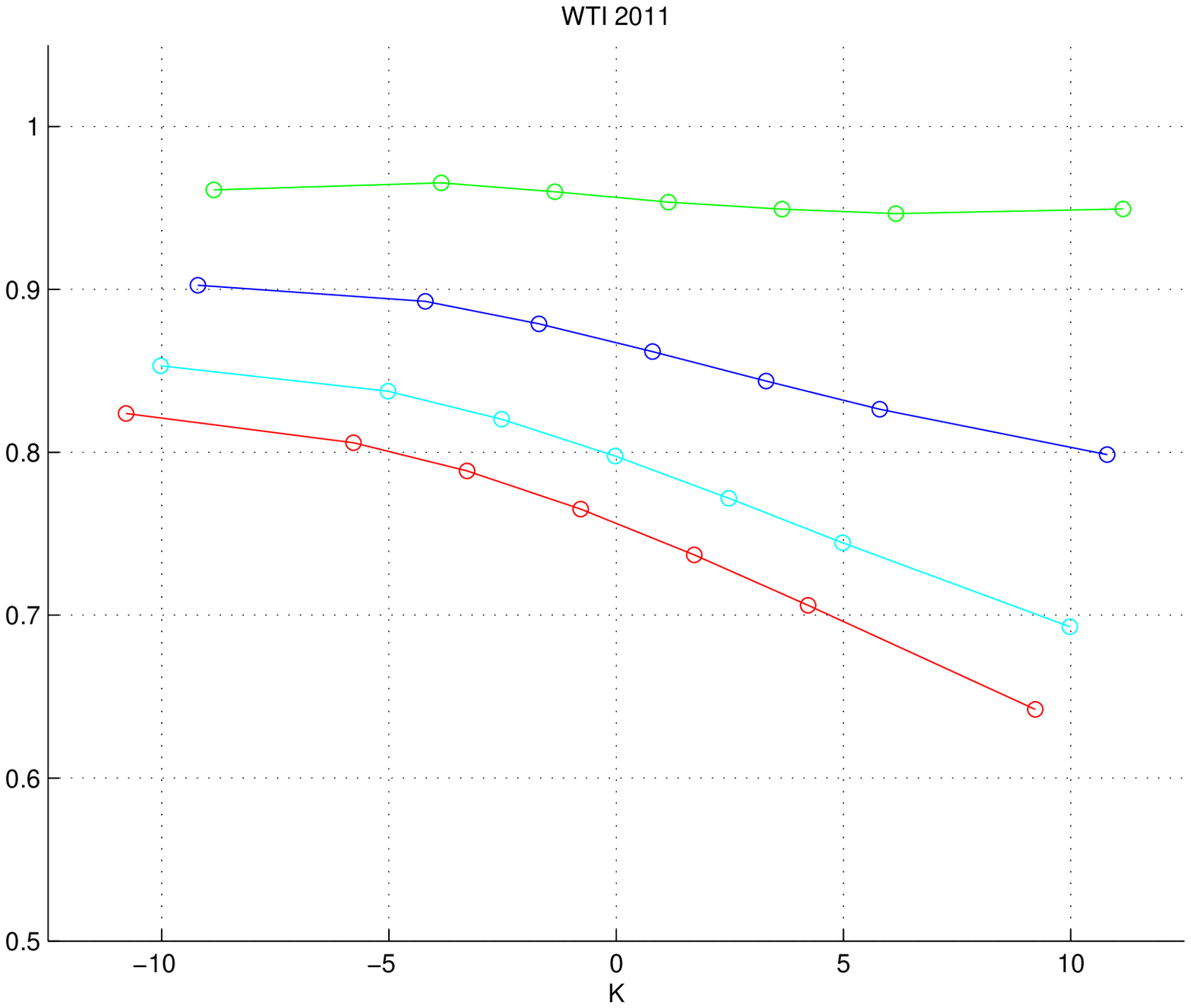} \quad
	 \includegraphics[height=6.0cm]{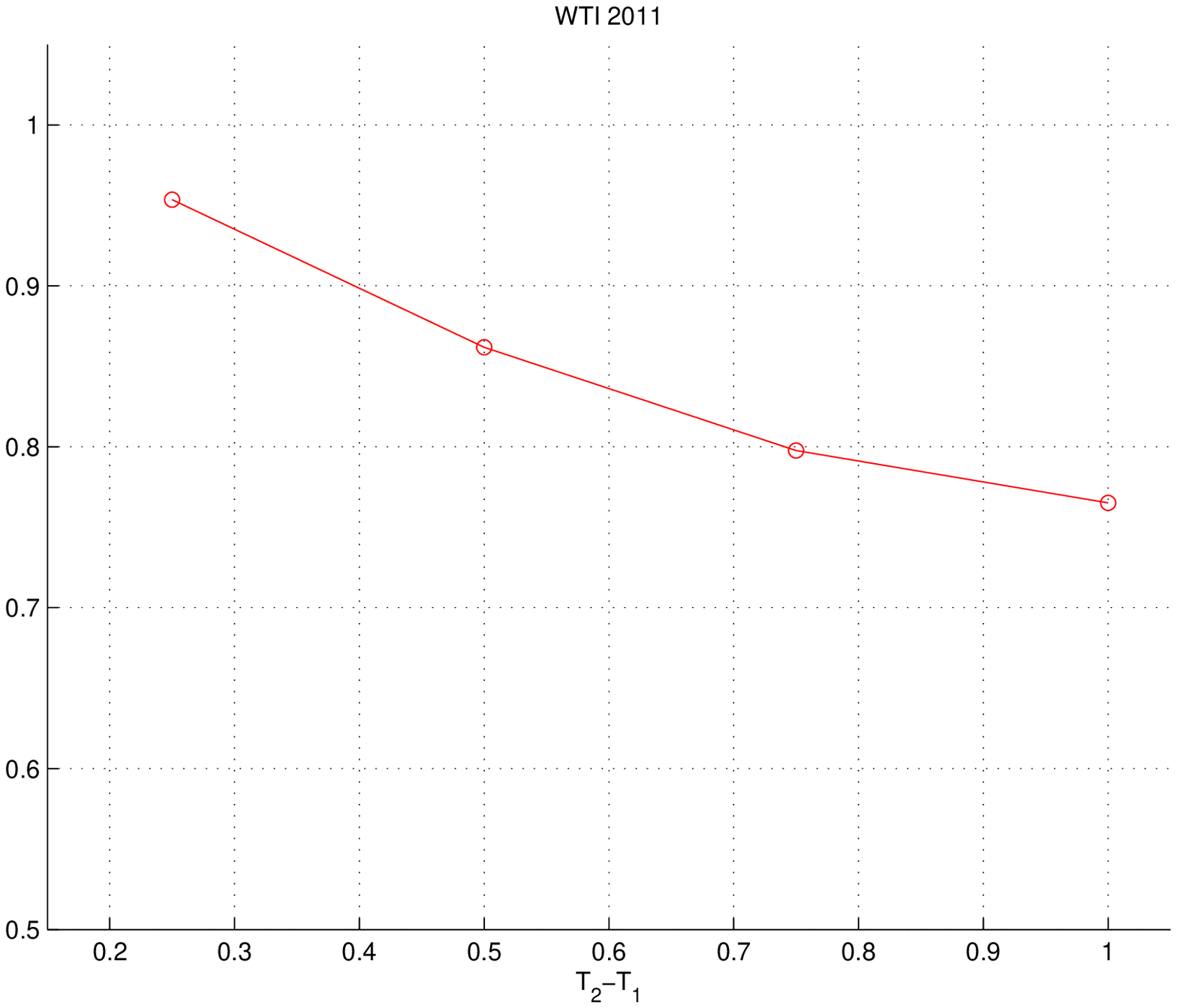}

	\caption{\label{Fig:clibratedimpliedcorrel}
Implied correlations from spread option prices obtained with the SV2F model calibrated to market data of December 2008 and March 2011.
The considered spread options have a maturity $T$ and first underlying futures expiry $T_1$ fixed at $3$ months.
$T_2 - T_1$, the difference between the underlying futures expiries, ranges from $3$ months to $1$ year.
\textit{Left column}: implied correlation smiles.
\textit{Right column}: at-the-money implied correlation term-structure.}
\end{figure}




\begin{figure}[ht]
\centering
    \includegraphics[height=6.0cm]{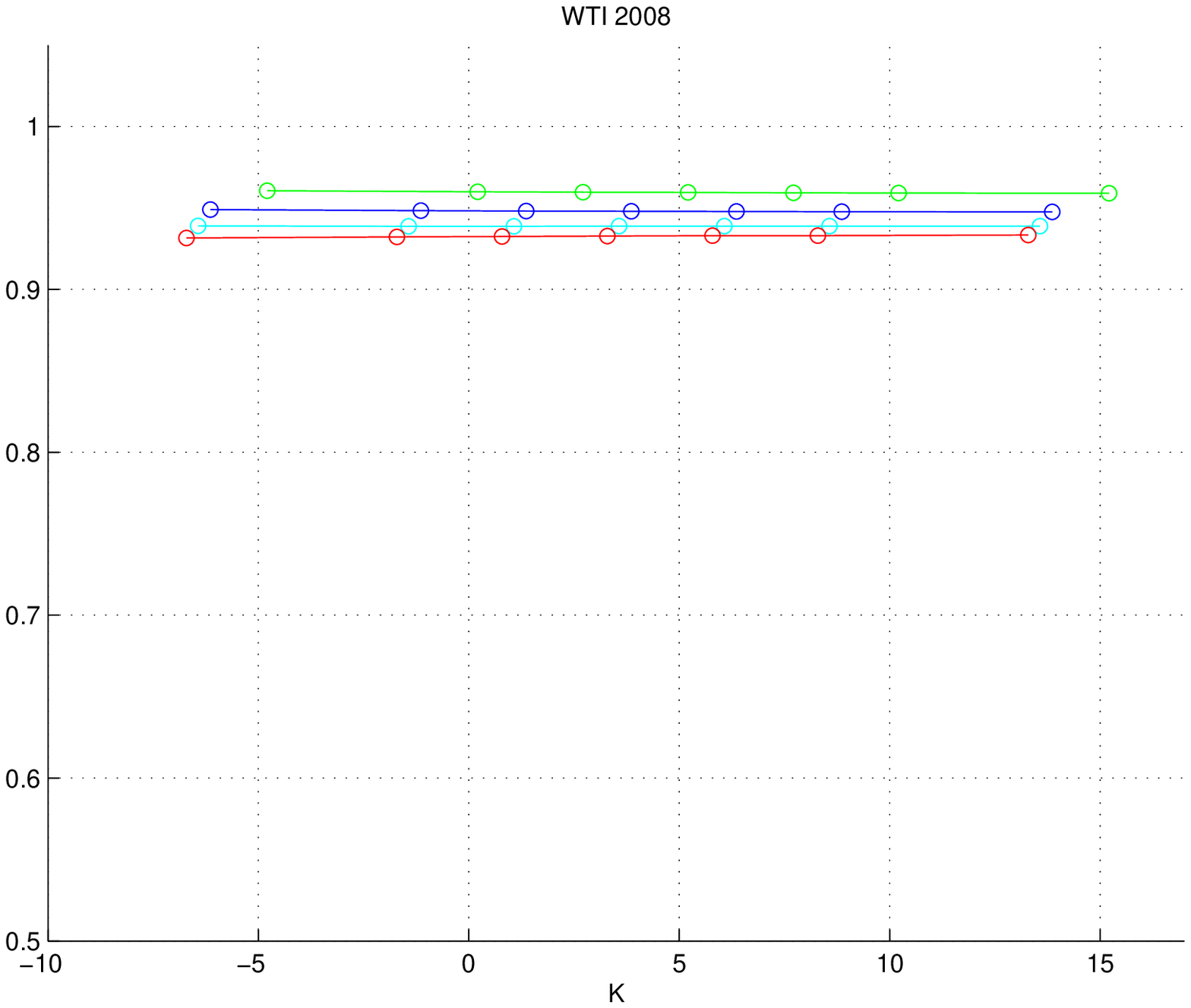} \quad
    \includegraphics[height=6.0cm]{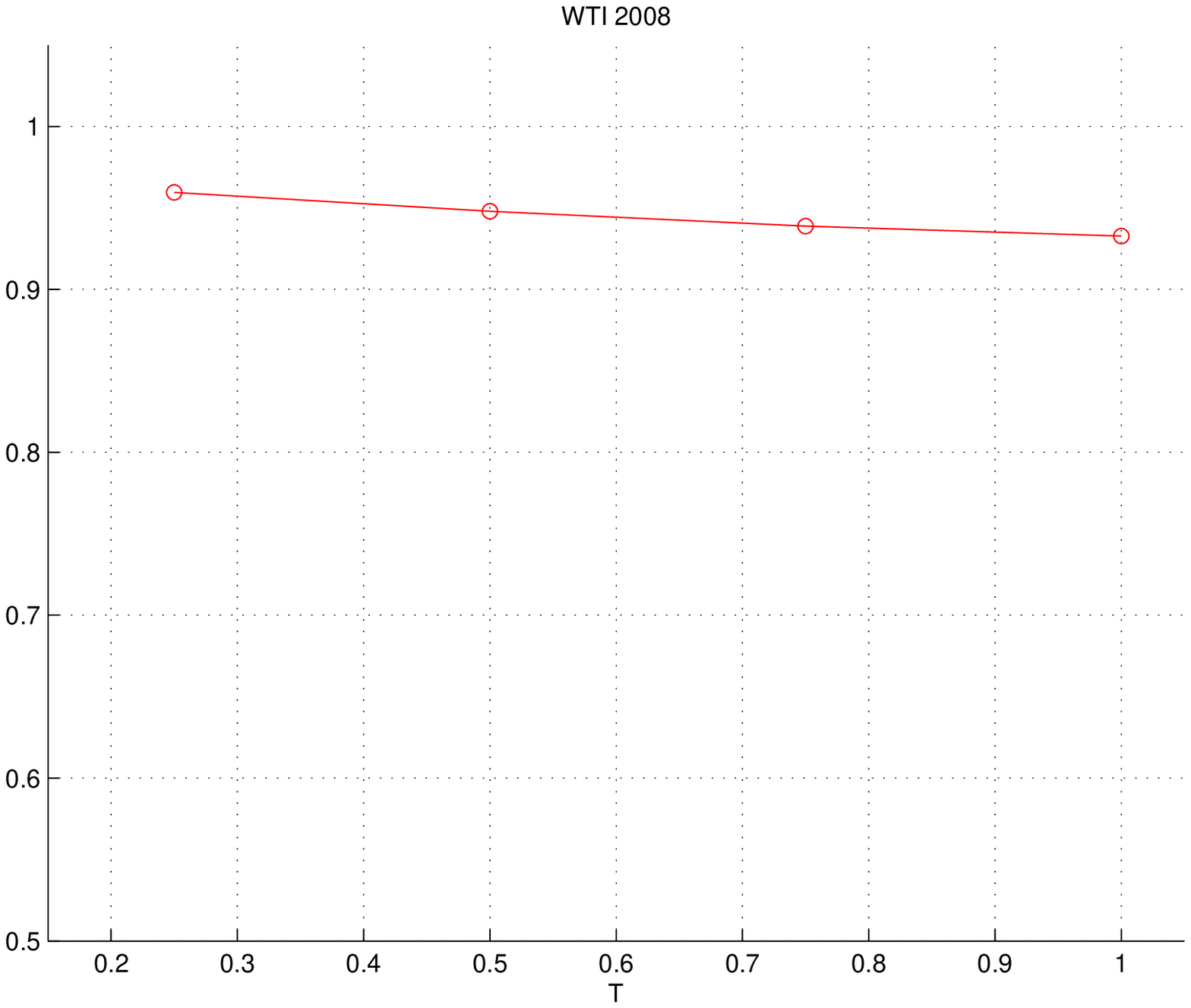}

    \includegraphics[height=6.0cm]{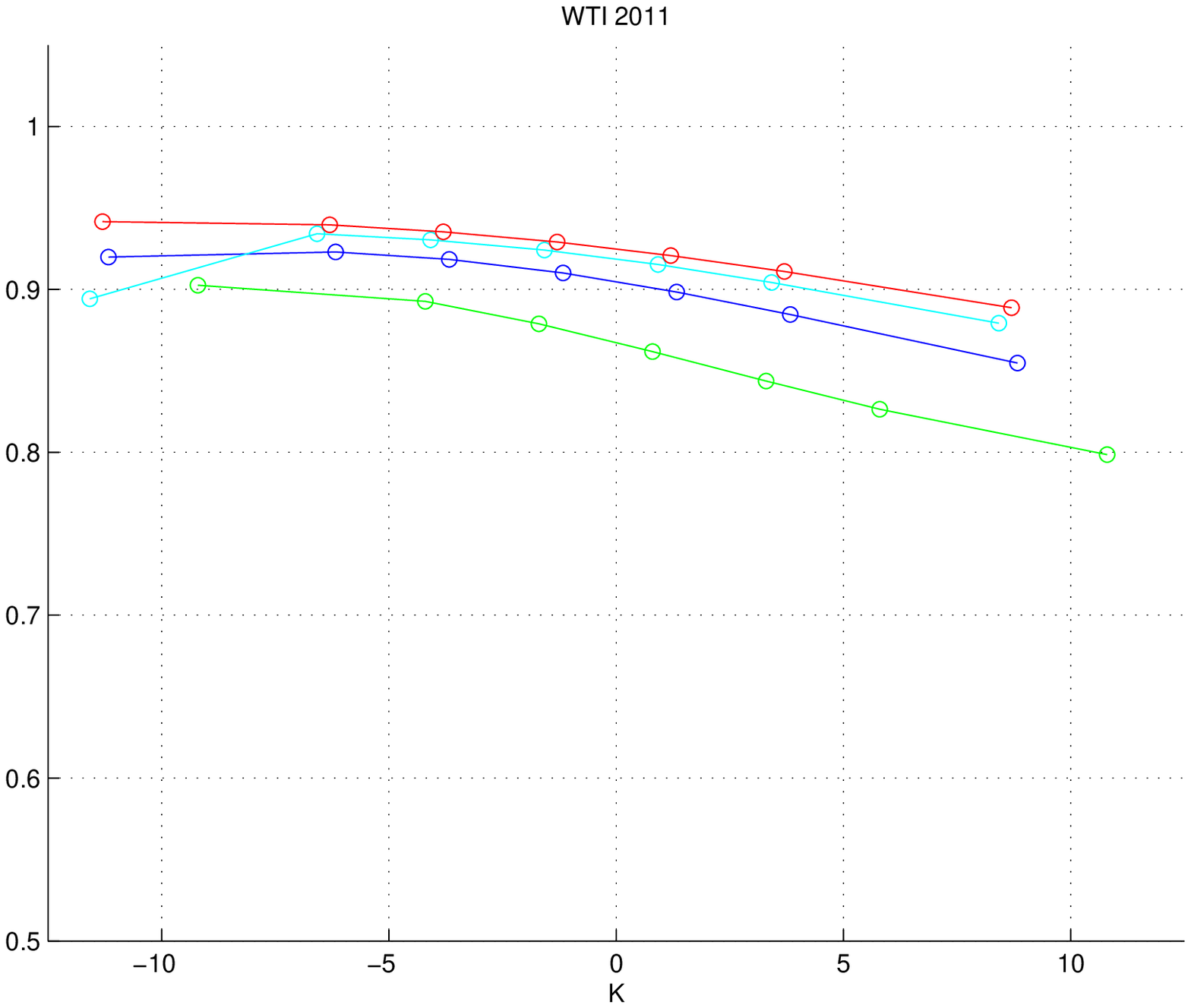} \quad
    \includegraphics[height=6.0cm]{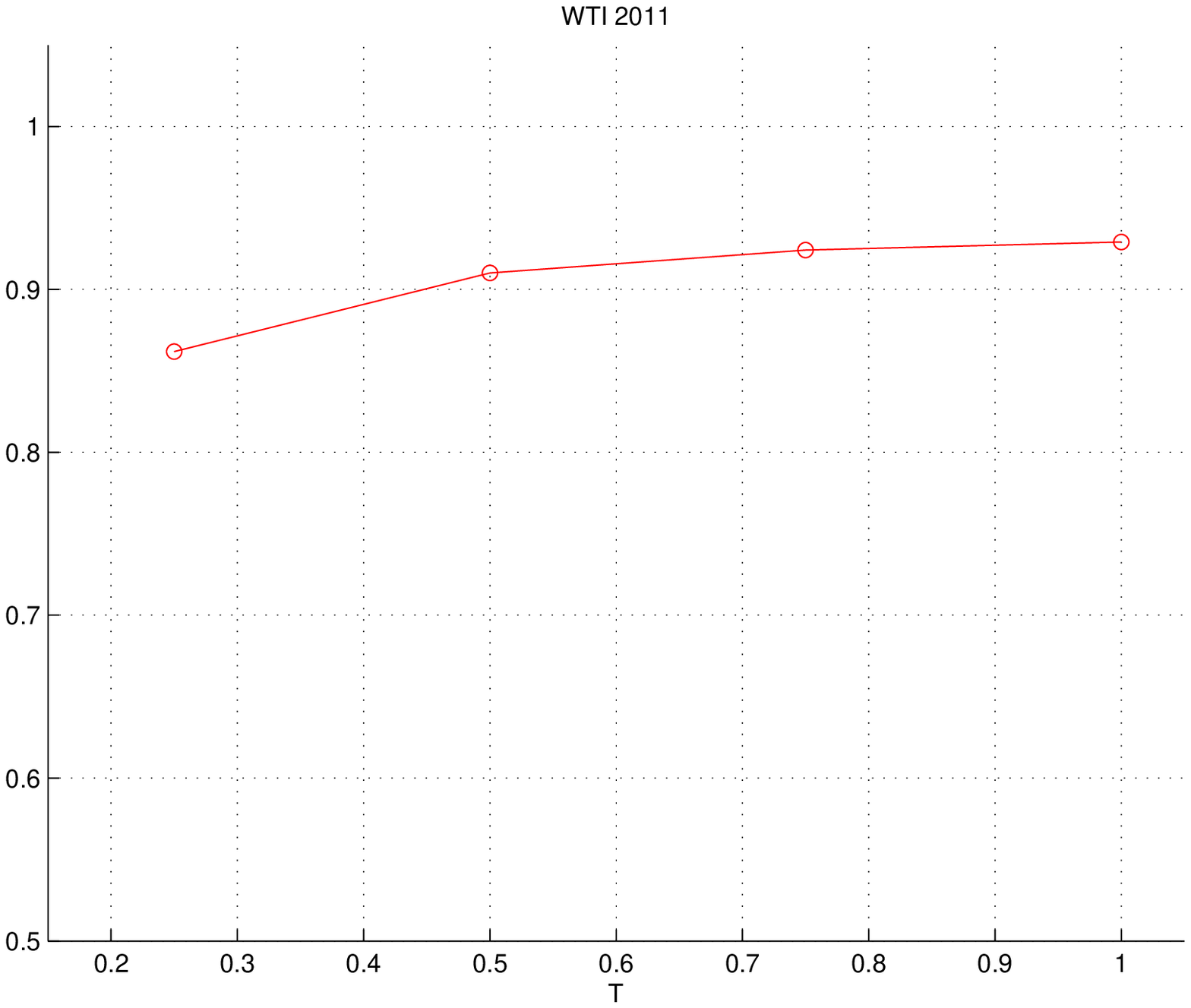}

    \caption{\label{Fig:clibratedimpliedcorrel2}
Implied correlations from spread option prices obtained with the SV2F model calibrated to market data of December 2008 and March 2011.
The considered spread options have a maturity $T$ and first underlying futures expiry $T_1$ varying from $3$ months to $1$ year.
$T_2 - T_1$, the difference between the underlying futures expiries, is held fixed at $6$ months.
\textit{Left column}: implied correlation smiles.
\textit{Right column}: at-the-money implied correlation term-structure.}
\end{figure}


\section{Conclusion}
\label{s:Conclusion}

We propose a multi-factor stochastic volatility model for commodity futures contracts.
In order to capture the Samuelson effect displayed by commodity futures contracts, we add expiry-dependent exponential damping factors to their volatility coefficients.
The pricing of single underlying European options on futures contracts is straightforward and can incorporate the volatility smile or skew observed in the market.
We calculate the joint characteristic function of two futures contracts in the model
and use the one-dimensional Fourier inversion method of \citet{CaldanaFusai2013} to price calendar spread options.
The model leads to stochastic correlation between the returns of two futures contracts.
We illustrate the distribution of this correlation in an example and compare it to the deterministic correlation of the corresponding \cite{ClewlowStrickland1999_2} model.
We also analyze the term-structure of dependence between pairs of futures in the model.
We do this by using suitable expressions to obtain the copula and copula density directly from the joint characteristic function.
When calibrated to vanilla options, the model is found to be able to produce stylized facts such as Samuelson effect and implied volatility smile
as well as a decreasing term-structure of dependence and implied correlation smile for spread options.

\pagebreak

\appendix


\section{Proofs}
\label{a1:Proofs}

In this appendix we prove Propositions \ref{Prop:JointCharacteristicFunction} and \ref{Prop:JointCharacteristicFunction_ClewlowStrickland}
by showing how to obtain the joint characteristic function $\phi$ of the log-returns $X_1(T)$ and $X_2(T)$ in the stochastic volatility model and the Clewlow-Strickland model.
We also prove Lemma \ref{Lemma:JointCDFFormula} and Proposition \ref{Prop:CopulaFromCharFunc}, which shows how the copula and its density can be calculated
from the joint characteristic function.

\begin{MyProof}{of Proposition \ref{Prop:JointCharacteristicFunction}.}
We have
\begin{align*}
\phi(u) &= \phi(u; T, T_1, T_2)
\\
&= {\mathbb E} \left[ \exp \left( i \sum_{k=1}^2 u_k X_k(T) \right) \right]
\\
&=
{\mathbb E}
\left[ \exp
\left( i \sum_{k=1}^2 u_k
\left\{
\sum_{j=1}^n \int_0^T e^{-\lambda_j(T_k - t)} \sqrt{v_j(t)} dB_j(t)
-\frac{1}{2} \sum_{j=1}^n \int_0^T e^{-2\lambda_j(T_k - t)} v_j(t) dt
\right\}
\right)
\right]
\\
&=
\prod_{j=1}^n
{\mathbb E}
\left[ \exp
\left( i \sum_{k=1}^2 u_k
\left\{
\int_0^T e^{-\lambda_j(T_k - t)} \sqrt{v_j(t)} dB_j(t)
-\frac{1}{2} \int_0^T e^{-2\lambda_j(T_k - t)} v_j(t) dt
\right\}
\right)
\right]
\\
&=
\prod_{j=1}^n E_j(u,T),
\end{align*}
where $E_j$ is a function of $u$ and $T$ given by
\begin{equation*}
E_j(u,T) = {\mathbb E}
\left[ \exp
\left( i \sum_{k=1}^2 u_k
\left\{
\int_0^T e^{-\lambda_j(T_k - t)} \sqrt{v_j(t)} dB_j(t)
-\frac{1}{2} \int_0^T e^{-2\lambda_j(T_k - t)} v_j(t) dt
\right\}
\right)
\right]
\end{equation*}
that otherwise depends only on the $j$-th model parameters $\lambda_j, \kappa_j, \theta_j, \sigma_j, v_{j,0}, \rho_j$.

We now calculate the function $E_j$. Since we are considering a fixed value of $j$, we drop this subscript in the following calculations.
We also write $\tilde{B}$ for $B_{n+j}$, the Brownian motion driving the $j$-th variance process.
Then we can decompose $B = B_j$, whose correlation with $\tilde{B} = B_{n+j}$ is given in equation \eqref{FuturesVarianceCorrelations} by
$\langle B_j, B_{n+j} \rangle = \rho_j dt = \rho dt$,
as $B = \rho \tilde{B} + \sqrt{1 - \rho^2} \hat{B}$, where $\hat{B}$ is uncorrelated with $\tilde{B}$.
Define the functions $f_1, f_2$ and $q$ given by
\begin{align*}
f_1(u,t) &= \sum_{k=1}^2 u_k e^{- \lambda(T_k - t)}, \quad f_2(u,t) = \sum_{k=1}^2 u_k e^{-2\lambda(T_k - t)},
\\
q(u,t)   &= i \rho \frac{\kappa - \lambda}{\sigma} f_1(u,t) - \frac{1}{2} (1 - \rho^2) f_1^2(u,t) - \frac{1}{2} i f_2(u,t).
\end{align*}
For simplicity, we write $f_1(t)$ for $f_1(u,t), f_2(t)$ for $f_2(u,t)$ and $q(t)$ for $q(u,t)$ in the following.

We first need an auxiliary result in order to calculate the characteristic function.
\begin{lemma}
\label{Lemma:StochasticIntegral}
\begin{equation}
\label{StochasticIntegral}
\sigma \int_0^T f_1(t) \sqrt{v(t)} d\tilde{B}(t)
=
\left[ f_1(t) \left\{ v(t) - \frac{\kappa \theta}{\lambda} \right\} \right]_0^T
+ (\kappa - \lambda) \int_0^T f_1(t) v(t) dt.
\end{equation}
\end{lemma}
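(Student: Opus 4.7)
The plan is to apply integration by parts (the Itô product rule) to the process $f_1(t) v(t)$, then solve algebraically for the stochastic integral. The key observation is that the exponential structure of $f_1$ yields a simple ODE, which lets us eliminate a deterministic integral that would otherwise appear.

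First, I would note that $f_1(t) = e^{\lambda t} \sum_{k=1}^2 u_k e^{-\lambda T_k}$ is a deterministic function of $t$ satisfying $f_1'(t) = \lambda f_1(t)$. Applying the product rule to $f_1(t) v(t)$ and using the CIR dynamics $dv(t) = \kappa(\theta - v(t))\,dt + \sigma\sqrt{v(t)}\,d\tilde B(t)$ gives
\begin{equation*}
d\bigl(f_1(t) v(t)\bigr) = \bigl[(\lambda - \kappa) f_1(t) v(t) + \kappa \theta f_1(t)\bigr] dt + \sigma f_1(t) \sqrt{v(t)}\, d\tilde B(t).
\end{equation*}
Integrating from $0$ to $T$ and isolating the stochastic integral yields
\begin{equation*}
\sigma \int_0^T f_1(t) \sqrt{v(t)}\, d\tilde B(t) = \bigl[f_1(t) v(t)\bigr]_0^T + (\kappa - \lambda) \int_0^T f_1(t) v(t)\, dt - \kappa \theta \int_0^T f_1(t)\, dt.
\end{equation*}

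Next, I would evaluate the remaining deterministic integral. Since $f_1$ solves $f_1' = \lambda f_1$, we immediately get
\begin{equation*}
\int_0^T f_1(t)\, dt = \frac{1}{\lambda}\bigl[f_1(t)\bigr]_0^T,
\end{equation*}
so the $-\kappa\theta$ term combines with $[f_1(t) v(t)]_0^T$ into the bracketed expression $\bigl[f_1(t)\{v(t) - \kappa\theta/\lambda\}\bigr]_0^T$ appearing in the statement. This produces exactly the identity \eqref{StochasticIntegral}.

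The proof is essentially routine once the right product to differentiate is identified. There is no serious obstacle: the ODE $f_1' = \lambda f_1$ makes the auxiliary integral telescope into a boundary term, which is precisely why the lemma is useful in the subsequent characteristic-function computation (it trades an awkward $d\tilde B$ integral for a boundary term plus a drift-type integral of $f_1 v$, which can then be handled together with the $-\tfrac12 f_2 v$ integral appearing in $E_j(u,T)$).
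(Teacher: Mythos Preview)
Your proof is correct and essentially matches the paper's argument: both apply It\^{o} integration by parts to $f_1(t)v(t)$ using the CIR dynamics and the relation $f_1'=\lambda f_1$, then evaluate $\int_0^T f_1(t)\,dt=\frac{1}{\lambda}[f_1(t)]_0^T$ to absorb the $\kappa\theta$ term into the boundary bracket. The only cosmetic difference is that the paper first writes $\int_0^T f_1\,dv$ two ways and equates them, whereas you differentiate $f_1 v$ directly; these are equivalent.
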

\begin{proof}
Multiplying equation \eqref{VarianceSDE} by $f_1(t)$ and then integrating from $0$ to $T$ gives
\begin{equation}
\label{dv-integral1}
\int_0^T f_1(t) dv(t) = \int_0^T f_1(t) \kappa (\theta - v(t)) dt + \sigma \int_0^T f_1(t) \sqrt{v(t)} d\tilde{B}(t).
\end{equation}
Using It\^{o}-integration by parts (see \cite{Oksendal2003}), we also have
\begin{align}
\int_0^T f_1(t) dv(t)
&= \left[ f_1(t) v(t) \right]_0^T - \int_0^T v(t) df_1(t)
\nonumber
\\
&= \left[ f_1(t) v(t) \right]_0^T - \lambda \int_0^T f_1(t) v(t) dt.
\label{dv-integral2}
\end{align}
Equating the right hand sides of equations \eqref{dv-integral1} and \eqref{dv-integral2} gives
\begin{align*}
\sigma \int_0^T f_1(t) \sqrt{v(t)} d\tilde{B}(t)
&= \left[ f_1(t) v(t) \right]_0^T - \lambda \int_0^T f_1(t) v(t) dt - \int_0^T f_1(t) \kappa (\theta - v(t)) dt
\\
&= \left[ f_1(t) v(t) \right]_0^T - \kappa \theta \int_0^T f_1(t) dt + (\kappa - \lambda) \int_0^T f_1(t) v(t) dt
\\
&= \left[ f_1(t) \left\{ v(t) - \frac{\kappa \theta}{\lambda} \right\} \right]_0^T + (\kappa - \lambda) \int_0^T f_1(t) v(t) dt,
\end{align*}
which proves the lemma.
\end{proof}

We now calculate $E(u,T)$.
\begin{align*}
E(u,T)
&=
\left[ \exp
\left( i \sum_{k=1}^2 u_k
\left\{
\int_0^T e^{-\lambda(T_k - t)} \sqrt{v(t)} dB(t)
-\frac{1}{2} \int_0^T e^{-2\lambda(T_k - t)} v(t) dt
\right\}
\right)
\right]
\\
&=
{\mathbb E}
\left[
\exp \left( i \int_0^T f_1(t) \sqrt{v(t)} dB(t) -\frac{1}{2} i \int_0^T f_2(t) v(t) dt \right)
\right]
\\
&=
{\mathbb E}
\left[
\exp \left( i \rho \int_0^T f_1(t) \sqrt{v(t)} d\tilde{B}(t)
+ i \sqrt{1 - \rho^2} \int_0^T f_1(t) \sqrt{v(t)} d\hat{B}(t)
-\frac{1}{2} i \int_0^T f_2(t) v(t) dt \right)
\right]
\\
&=
{\mathbb E}
\left[
\exp \left( i \rho \int_0^T f_1(t) \sqrt{v(t)} d\tilde{B}(t)
-\frac{1}{2} (1 - \rho^2) \int_0^T \left( f_1(t) \right)^2 v(t) dt
-\frac{1}{2} i \int_0^T f_2(t) v(t) dt \right)
\right]
\\
&=
{\mathbb E}
\Big[
\exp \Big(
i \frac{\rho}{\sigma} \left[ f_1(t) \left\{ v(t) - \frac{\kappa \theta}{\lambda} \right\} \right]_0^T
+ i \rho \frac{\kappa - \lambda}{\sigma} \int_0^T f_1(t) v(t) dt
\\
&\quad -\frac{1}{2} (1 - \rho^2) \int_0^T \left( f_1(t) \right)^2 v(t) dt
-\frac{1}{2} i \int_0^T f_2(t) v(t) dt \Big)
\Big]
\\
&=
\exp \left( i \frac{\rho}{\sigma} \left\{ \frac{\kappa \theta}{\lambda} (f_1(0) - f_1(T)) - f_1(0) v(0) \right\} \right)
\\
&\quad \cdot
{\mathbb E}
\left[
\exp \left(
i \frac{\rho}{\sigma} f_1(T) v(T) + \int_0^T q(t) v(t) dt
\right)
\right].
\end{align*}
The expectation in the last line can be computed using the Feynman-Kac theorem (see \citet{Oksendal2003}).
Define the function $h$ given by
\begin{equation*}
h(t,v) = {\mathbb E} \left[ \exp \left( i \frac{\rho}{\sigma} f_1(T) v(T) + \int_t^T q(s) v(s) ds \right) \right].
\end{equation*}
Then $h$ satisfies the PDE
\begin{equation}
\label{hPDE}
\frac{\partial h}{\partial t} (t,v)
+ \kappa (\theta - v(t)) \frac{\partial h}{\partial v} (t,v)
+ \frac{1}{2} \sigma^2 v(t) \frac{\partial^2 h}{\partial v^2} (t,v)
+ q(t) v(t) h(t,v)
= 0,
\end{equation}
with terminal condition
\begin{equation*}
 h(T,v) = \exp \left( i \frac{\rho}{\sigma} f_1(T) v(T) \right).
\end{equation*}

We know from \citet{DuffiePanSingleton2000} that $h$ has affine form
\begin{equation}
\label{hGuess}
h(t,v) = \exp \left( A(t,T) v(t) + B(t,T) \right),
\end{equation}
with $A(T,T) = i \frac{\rho}{\sigma} f_1(T), B(T,T) = 0.$
Putting \eqref{hGuess} in \eqref{hPDE} gives
\begin{equation*}
B_t + A_t v + \kappa (\theta - v) A + \frac{1}{2} \sigma^2 v A^2 + q v = 0,
\end{equation*}
and collecting the terms with and without $v$ leads to the two ODEs
\begin{align}
\label{A_RiccatiEquation}
A_t - \kappa A + \frac{1}{2} \sigma^2 A^2 + q &= 0,
\\
\label{A_Primitive}
B_t + \kappa \theta A &= 0.
\end{align}
This completes the proof of the proposition.
\end{MyProof}


\begin{MyProof}{of Proposition \ref{Prop:JointCharacteristicFunction_ClewlowStrickland}.}
We calculate the joint characteristic function in the \citet{ClewlowStrickland1999_2} model as follows.
\begin{align*}
\phi(u) &= \phi(u; T, T_1, T_2)
\\
&= {\mathbb E} \left[ \exp \left( i \sum_{k=1}^2 u_k X_k(T) \right) \right]
\\
&=
{\mathbb E}
\left[ \exp
\left( i \sum_{k=1}^2 u_k
\left\{
\sum_{j=1}^n \int_0^T e^{-\lambda_j(T_k - t)} \sigma_j dB_j(t)
-\frac{1}{2} \sum_{j=1}^n \int_0^T e^{-2\lambda_j(T_k - t)} \sigma_j^2 dt
\right\}
\right)
\right]
\\
&=
\prod_{j=1}^n
{\mathbb E}
\left[ \exp
\left( i \sum_{k=1}^2 u_k
\left\{
\int_0^T e^{-\lambda_j(T_k - t)} \sigma_j dB_j(t)
-\frac{1}{2} \int_0^T e^{-2\lambda_j(T_k - t)} \sigma_j^2 dt
\right\}
\right)
\right]
\\
&=
\prod_{j=1}^n
\exp \left( i \sum_{k=1}^2 u_k
\left\{ -\frac{1}{2} \int_0^T e^{-2\lambda_j(T_k - t)} \sigma_j^2 dt \right\}
\right)
{\mathbb E}
\left[ \exp
\left( i \sum_{k=1}^2 u_k
\left\{ \int_0^T e^{-\lambda_j(T_k - t)} \sigma_j dB_j(t) \right\}
\right)
\right]
\\
&=
\prod_{j=1}^n
\exp
\left( i \sum_{k=1}^2 u_k
\left[
-\frac{\sigma_j^2}{4 \lambda_j} e^{-2\lambda_j(T_k - t)}
\right]_0^T
\right)
\exp \left(
-\frac{\sigma_j^2}{4 \lambda_j}
\left[
\left(
\sum_{k=1}^2 u_k e^{-\lambda_j(T_k - t)}
\right)^2
\right]_0^T
\right)
\\
&=
\prod_{j=1}^n
\exp \left(
-\frac{\sigma_j^2}{4 \lambda_j} (e^{2 \lambda_j T} - 1)
\left\{
i (u_1 e^{-2 \lambda_j T_1} + u_2 e^{-2 \lambda_j T_2}) + (u_1 e^{-\lambda_j T_1} + u_2 e^{-\lambda_j T_2})^2
\right\}
\right).
\end{align*}
This completes the proof of the proposition.
\end{MyProof}


\begin{MyProof}{of Lemma \ref{Lemma:JointCDFFormula}.}

The proof to obtain this expression is the same, \textit{mutatis mutandis}, as the proof in the univariate case provided in \citet{LeCourtoisWalter2014}.
Within the proof and for ease of reading, we drop the explicit dependencies on $T,T_1$ and $T_2$.

Let $a_1>0$ and $a_2 > 0$ be fixed and $h$ be the function defined by
\begin{equation*}
h(x_1,x_2) = e^{-(a_1x_1+a_2x_2)}G(x_1,x_2) = e^{-(a_1x_1+a_2x_2)}\int^{x_2}_{-\infty}\int^{x_1}_{-\infty}{g(s_1,s_2)ds_1ds_2}.
\end{equation*}
Now let $\Lambda$ be the two-dimensional Fourier Transform of $h$. We have
\begin{align*}
\Lambda(u_1,u_2) & = \int^{+\infty}_{-\infty}\int^{+\infty}_{-\infty}{e^{i(u_1 x_{1}+u_2 x_{2})}h(x_1,x_2)dx_1 dx_2}, \\
 & = \int^{+\infty}_{-\infty}\int^{+\infty}_{-\infty}{e^{i(u_1 x_{1}+u_2 x_{2})}\left(e^{-(a_1x_1+a_2x_2)}\int^{x_2}_{-\infty}\int^{x_1}_{-\infty}{g(s_1,s_2)ds_1ds_2} \right) dx_1 dx_2}, \\
 & = \int^{+\infty}_{-\infty}\int^{+\infty}_{-\infty}{ \int^{x_2}_{-\infty}\int^{x_1}_{-\infty}{e^{i(u_1 x_{1}+u_2 x_{2})} e^{-(a_1x_1+a_2x_2)}g(s_1,s_2)ds_1ds_2} dx_1 dx_2}.
\end{align*}
Noting that $-\infty<s_1<x_1<+\infty$ and $-\infty<s_2<x_2<+\infty$, the expression of $\Lambda$ becomes
\begin{align*}
\Lambda(u_1,u_2) & = \int^{+\infty}_{-\infty}\int^{+\infty}_{-\infty}{ \int^{+\infty}_{s_2}\int^{+\infty}_{s_1}{e^{i(u_1 x_{1}+u_2 x_{2})} e^{-(a_1x_1+a_2x_2)}g(s_1,s_2)dx_1dx_2} ds_1 ds_2}, \\
 & = \int^{+\infty}_{-\infty}\int^{+\infty}_{-\infty}{ g(s_1,s_2) \left( \int^{+\infty}_{s_2}\int^{+\infty}_{s_1}{e^{i(u_1 x_{1}+u_2 x_{2})} e^{-(a_1x_1+a_2x_2)}dx_1dx_2} \right) ds_1 ds_2}.
\end{align*}
The double integral between parentheses can be computed as
\begin{align*}
\int^{+\infty}_{s_2}\int^{+\infty}_{s_1}{e^{i(u_1 x_{1}+u_2 x_{2})} e^{-(a_1x_1+a_2x_2)}dx_1dx_2} & = \int^{+\infty}_{s_1}{e^{iu_1 x_{1}} e^{-a_1x_1} dx_1} \int^{+\infty}_{s_2}{e^{iu_2 x_{2}} e^{-a_2x_2} dx_2}, \\
& = \left[ \frac{e^{-(a_1-iu_1)x_1}}{-(a_1-iu_1)} \right]^{+\infty}_{s_1} \left[ \frac{e^{-(a_2-iu_2)x_2}}{-(a_2-iu_2)} \right]^{+\infty}_{s_2}.
\end{align*}
Note that $\left|e^{-(a_1-iu_1)x_1} \right| \longrightarrow 0$ when $x_1$ goes to $+\infty$ and $\left|e^{-(a_2-iu_2)x_2} \right| \longrightarrow 0$
when $x_2$ goes to $+\infty$, so that we obtain
\begin{align*}
\Lambda(u_1,u_2) & = \int^{+\infty}_{-\infty}\int^{+\infty}_{-\infty}{ g(s_1,s_2) \left(-\frac{e^{-(a_1-iu_1)s_1}}{-(a_1-iu_1)}\right) \left(-\frac{e^{-(a_2-iu_2)s_2}}{-(a_2-iu_2)}\right) ds_1 ds_2}, \\
 & = \frac{1}{(a_1-iu_1)(a_2-iu_2)}\int^{+\infty}_{-\infty}\int^{+\infty}_{-\infty}{ g(s_1,s_2) e^{-(a_1-iu_1)s_1}e^{-(a_2-iu_2)s_2} ds_1 ds_2}, \\
 & = \frac{1}{(a_1-iu_1)(a_2-iu_2)}\int^{+\infty}_{-\infty}\int^{+\infty}_{-\infty}{ g(s_1,s_2) e^{i \left((u_1+ia_1)s_1+(u_2+ia_2)s_2 \right)} ds_1 ds_2}, \\
 & = \frac{\phi(u_1+ia_1,u_2+ia_2)}{(a_1-iu_1)(a_2-iu_2)},
\end{align*}
since
\begin{equation*}
\phi(u_1,u_2) = \int^{+\infty}_{-\infty}\int^{+\infty}_{-\infty}{ e^{i(u_1s_1+u_2s_2)} g(s_1,s_2) ds_1 ds_2}.
\end{equation*}
The function $h$ can be written as the two-dimensional inverse Fourier Transform of $\Lambda$:
\begin{equation*}
h(x_1,x_2) = \frac{1}{4 \pi^2} \int^{+\infty}_{-\infty}\int^{+\infty}_{-\infty}{e^{-i(u_1x_1+u_2x_2)}\frac{\phi(u_1+ia_1,u_2+ia_2)}{(a_1-iu_1)(a_2-iu_2)} du_1 du_2}, \\
\end{equation*}
and $G$ is then easily obtained as
\begin{equation*}
G(x_1,x_2) = \frac{e^{a_1x_1+a_2x_2}}{4 \pi^2}
\int^{+\infty}_{-\infty}\int^{+\infty}_{-\infty}{e^{-i(u_1x_1+u_2x_2)}\frac{\phi(u_1+ia_1,u_2+ia_2)}{(a_1-iu_1)(a_2-iu_2)} du_1 du_2},
\end{equation*}
which concludes the proof.
\end{MyProof}


\begin{MyProof}{of Proposition \ref{Prop:CopulaFromCharFunc}.}

Sklar's Theorem allows one to write the copula function of a pair of random variables from its joint distribution function as, for $(v_1,v_2) \in [0,1]^2$,
\begin{equation*}
C(v_1,v_2,T)=G\left(G^{-1}_1(v_1,T),G^{-1}_2(v_2,T),T \right).
\end{equation*}

The expression for the copula function in Proposition \ref{Prop:CopulaFromCharFunc} follows by using Lemma \ref{Lemma:JointCDFFormula},
which expresses the joint distribution function in terms of the joint characteristic function $\phi$.
Assuming $C(.,T)$ is absolutely continuous, we can write its copula density, for $(v_1,v_2) \in [0,1]^2$, as
\begin{equation*}
c(v_1,v_2,T)=\frac{g(G^{-1}_1(v_1,T),G^{-1}_2(v_2,T),T)}{g_1(G^{-1}_1(v_1,T),T)g_2(G^{-1}_2(v_2,T),T)}.
\end{equation*}

Again, the expression for the copula density in Proposition \ref{Prop:CopulaFromCharFunc} follows by using expressions (\ref{JointPDFFormula}),
(\ref{MarginalPDF1Formula}) and (\ref{MarginalPDF2Formula}) that express the joint and marginal densities of $(X_1(T),X_2(T))$ in terms of the joint characteristic function $\phi$.

\end{MyProof}


\bibliographystyle{plainnat}

\bibliography{articles,books,websites}

\end{document}